\documentclass[11pt]{article}

\usepackage{amsmath,amssymb,amsthm,graphicx,booktabs}
\usepackage[margin=1.2in]{geometry}
\usepackage[round]{natbib}
\usepackage{setspace}
\newcommand{\R}{\mathbb{R}}
\newcommand{\tr}{\operatorname{tr}}
\newcommand{\diag}{\operatorname{diag}}
\newcommand{\vecl}{\operatorname{vecl}}
\newcommand{\one}{\mathbf{1}}
\newcommand{\Sym}{\mathbb{S}_n}
\newcommand{\SPD}{\mathbb{S}_n^{++}}
\newcommand{\Cn}{\mathcal{C}_n}

\newtheorem{theorem}{Theorem}
\newtheorem{corollary}{Corollary}
\newtheorem{proposition}{Proposition}

\title{Fast inversion of the generalized Fisher transformation
of correlation matrices}

\author{Ilya Archakov\thanks{Department of Economics, York University,
Toronto, Ontario M3J 1P3, Canada; iarch@yorku.ca.}
\and Peter Reinhard Hansen\thanks{Department of Economics, University
of North Carolina at Chapel Hill, Chapel Hill, North Carolina 27599,
U.S.A.; hansen@unc.edu.}}

\date{September 2026}

\begin{document}
\maketitle

\begin{abstract}
\noindent The generalized Fisher transformation maps a non-singular correlation matrix to an unconstrained real vector through the off-diagonal elements of its matrix logarithm. Evaluating its inverse is a computational bottleneck in dynamic correlation and multivariate volatility models. We develop a fast inversion algorithm by characterizing the unknown diagonal as the minimizer of a smooth, strictly convex, and coercive objective. An explicit Hessian and global spectral bounds identify the standard fixed-point iteration as a quasi-Newton method and explain why it can converge slowly near singularity. Every fixed-point step decreases the objective, and the iteration converges from every starting point. These results motivate GFT-FP+N, a hybrid of fixed-point and matrix-free Newton steps that never forms the Jacobian. In benchmarks with up to $1{,}000$ replications per design and dimensions up to $800$, GFT-FP+N reduces computation time by up to a factor of forty-five relative to the fixed-point iteration and converged in every replication, including on designs where Broyden's method almost always fails. Julia and R packages are provided.

\medskip
\noindent\textit{Keywords}: Convex optimization; Correlation matrix;
Fisher transformation; Matrix exponential; Newton method.
\end{abstract}

\section{Introduction}\label{sec:intro}

The transformation $\gamma(C)=\vecl(\log C)$, where $\log$ is the matrix logarithm and $\vecl$ stacks the below-diagonal elements, maps the set $\Cn$ of non-singular $n\times n$ correlation matrices bijectively onto $\R^{d}$, $d=n(n-1)/2$ \citep{ArchakovHansen:Correlation}. For $n=2$ it reduces to the classical Fisher transformation \citep{Fisher:1921}, and it is accordingly known as the generalized Fisher transformation. Its unconstrained range facilitates dynamic correlation and multivariate stochastic volatility modeling \citep{TongHansenArchakov:2026,ChenFeiYu:2025}, simulation and inference \citep{ArchakovHansenLuo-RandomCorr:2024,ArchakovHansen:2026}, and structured correlation modeling \citep{ArchakovHansen:CanonicalBlockMatrix}. It continues a long line of unconstrained covariance and correlation parametrizations \citep{PinheiroBates:1996,Pourahmadi:1999}.

Many of these applications repeatedly evaluate the inverse mapping, for which no general closed form is available. In a dynamic model, inversion may be required at every time point within each likelihood evaluation, making its computational cost consequential. \citet{ArchakovHansen:Correlation} established the bijection through a contraction argument and proposed a fixed-point iteration that adjusts the diagonal of the matrix logarithm until its exponential has unit diagonal. They derived the iteration's Jacobian, related convergence to its spectral radius, and documented slow convergence for near-singular matrices. \citet{ChenFeiYu:2025} identified inversion as the principal computational bottleneck in multivariate stochastic volatility estimation and compared the fixed point with Newton's and Broyden's methods. Their Newton implementation forms the exact Jacobian at order $n^{4}$ per iteration, whereas Broyden's method forms it once and applies rank-one updates; they adopted Broyden's method.

\citet{Zwiernik:2025} placed the parametrization in a general framework of entropic covariance models, in which selected entries of a covariance matrix are paired with the complementary entries of its matrix logarithm. The variational formulation used here is the unit-diagonal specialization of that framework, with existence and uniqueness following from the Bregman-divergence argument of \citet[Theorem~5.4]{Zwiernik:2025}.

We use this formulation to develop a fast and reliable inversion algorithm. For a prescribed $z\in\R^{d}$, let $A[x]$ be the symmetric matrix with $\vecl(A[x])=z$ and diagonal $x$. Evaluating the inverse reduces to minimizing the smooth, strictly convex, and coercive objective $f(x)=\tr(e^{A[x]})-\one'x$ over $x\in\R^{n}$ (Theorem~\ref{thm:var}). Our direct proof yields an explicit Hessian. Theorem~\ref{prop:spec} synthesizes global spectral bounds, several of which are implicit in the appendix of \citet{ArchakovHansen:Correlation}, and provides a proof through the integral representation of the Hessian. These bounds characterize the local convergence factor of the fixed-point iteration and show that the Newton system is never worse conditioned than $e^{A[x]}$, before or after diagonal preconditioning.

A Golden--Thompson inequality shows that every fixed-point step decreases $f$, yielding a short proof of global convergence (Proposition~\ref{prop:descent}). This descent property provides the basis for combining the fixed point with Newton acceleration. The resulting algorithm, GFT-FP+N, computes Newton steps for the log-diagonal residual by preconditioned conjugate gradients using Hessian-vector products, avoiding explicit Jacobian formation. Fixed-point steps and Newton steps accepted by the Armijo line search decrease the same objective. The implementation also uses evaluation in the log domain and safeguards for floating-point arithmetic, described in Section~\ref{sec:algorithm}.

The experiments in Section~\ref{sec:numerics} assess speed and reliability across designs with dimensions up to $800$ and up to $1{,}000$ replications per design. GFT-FP+N reduces computation time by up to a factor of forty-five relative to the fixed-point iteration and converged in every replication, including on designs where Broyden's method almost always fails. Comparisons with full Newton using the same safeguards isolate the computational benefit of matrix-free linear solves, while comparisons with Anderson acceleration and limited-memory BFGS assess performance against general-purpose alternatives. A sequential inversion experiment examines the gains available with warm starts. Julia and R packages are provided.

\section{A variational characterization}\label{sec:var}

Let $\Sym$ denote the symmetric $n\times n$ matrices, $\SPD$ the
positive definite ones, and $\Cn=\{C\in\SPD:\diag(C)=\one\}$, where
$\diag(\cdot)$ is the vector of diagonal elements and
$\one=(1,\ldots,1)'$. The matrix exponential is a bijection from $\Sym$
onto $\SPD$ \citep[Ch.~11]{Higham:2008}, so $C\mapsto A=\log C$
identifies $\Cn$ with $\{A\in\Sym:\diag(e^{A})=\one\}$. For $z\in\R^{d}$
and $x\in\R^{n}$ let $A[x]=A[x;z]$ be the symmetric matrix with
$\vecl(A[x])=z$ and diagonal $x$.

\begin{theorem}\label{thm:var}
Fix $z\in\R^{d}$ and define $f:\R^{n}\to\R$ by
\begin{equation}\label{eq:f}
f(x)=\tr\big(e^{A[x]}\big)-\one'x .
\end{equation}
Then $f$ is smooth, strictly convex, and coercive, with
\[
\nabla f(x)=\diag(e^{A[x]})-\one,
\qquad
\nabla^{2}f(x)=H(x)=\int_{0}^{1}e^{sA[x]}\circ e^{(1-s)A[x]}ds\succ0,
\]
where $\circ$ is the Hadamard product. Consequently $f$ has a unique
minimizer $x^{\ast}(z)$, which is the unique solution of
$\diag(e^{A[x]})=\one$, and $z\mapsto x^{\ast}(z)$ is real-analytic.
\end{theorem}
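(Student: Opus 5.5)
Since $x\mapsto A[x]=A_0+\diag^{*}(x)$ is affine, where $A_0$ has zero diagonal and $\diag^{*}(x)$ is the diagonal matrix with entries $x$, the composite $x\mapsto\tr(e^{A[x]})$ is real-analytic in $(x,z)$ jointly; this follows because the exponential power series converges locally uniformly. Smoothness is therefore immediate. For the derivatives I will use the Fréchet derivative of the exponential,
\[
De^{A}[E]=\int_0^1 e^{sA}Ee^{(1-s)A}\,ds .
\]
Taking the trace and using cyclicity gives $D\tr(e^{A})[E]=\tr(e^{A}E)$. With $E=e_ie_i'$ this yields $\partial_i f=(e^{A})_{ii}-1$, which is the stated gradient. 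Differentiating once more, again with $E=e_je_j'$, gives
\[
\partial_j(e^{A})_{ii}=\int_0^1 (e^{sA})_{ij}(e^{(1-s)A})_{ji}\,ds .
\]
Since the matrices are symmetric, this is exactly the $(i,j)$ entry of $\int_0^1 e^{sA}\circ e^{(1-s)A}\,ds$.

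Positive definiteness of $H(x)$ comes from the Schur product theorem. For each $s$, both $e^{sA}$ and $e^{(1-s)A}$ are positive definite, so their Hadamard product is positive definite. Integrating over $s\in[0,1]$ preserves this, so $H\succ0$. Alternatively, for $v\neq0$ one can diagonalize $A=Q\Lambda Q'$ and write $v'Hv$ as a sum of positive terms. Strict convexity then follows, since the Hessian is everywhere positive definite.

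Coercivity is the step requiring care. I bound the trace below by the diagonal. Because $e^{A}$ is positive definite and $\tr$ is the sum of diagonal entries, $\tr(e^{A})\ge \max_i (e^{A})_{ii}$. The standard convexity bound $(e^{A})_{ii}=e_i'e^{A}e_i\ge e^{e_i'Ae_i}=e^{x_i}$ follows from Jensen applied to the spectral measure of $A$ at $e_i$. This gives
\[
f(x)\ge \sum_i\big(e^{x_i}-x_i\big).
\]
Each term satisfies $e^{t}-t\ge 1+|t|\cdot c$ for large $|t|$, growing exponentially as $t\to+\infty$ and linearly as $t\to-\infty$. Summing all terms, $\sum_i(e^{x_i}-x_i)\to\infty$ as $\|x\|\to\infty$. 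Hence $f$ is coercive.

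Conclusions follow in order. A continuous coercive function attains its minimum, and strict convexity makes the minimizer unique. Since $f$ is convex and differentiable, the minimizers are exactly the solutions of $\nabla f=0$, that is, of $\diag(e^{A[x]})=\one$; so this equation has the unique solution $x^*(z)$. For analyticity, the map $F(x,z)=\diag(e^{A[x;z]})-\one$ is real-analytic with $\partial_xF=H(x)$ nonsingular. The real-analytic implicit function theorem therefore gives a local analytic solution near each $(x^*(z_0),z_0)$. Uniqueness identifies this local solution with $x^*(z)$, so $z\mapsto x^*(z)$ is real-analytic globally.

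The main obstacle I anticipate is the coercivity bound. I will first try the Jensen argument $e_i'e^{A}e_i\ge e^{A_{ii}}$. If that proves awkward, I will fall back on Golden–Thompson, or on the Peierls–Bogoliubov inequality $\tr e^{A}\ge \sum_i e^{A_{ii}}$, which gives the same estimate directly.
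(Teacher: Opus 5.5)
Your proposal is correct and follows the paper's proof essentially step for step. Both arguments use the same chain: the Fr\'echet derivative of the exponential with trace cyclicity for the gradient and Hessian, the Schur product theorem for $H\succ0$, Jensen's inequality on the spectral decomposition to get $(e^{A})_{ii}\geq e^{x_i}$ and hence coercivity, and the implicit function theorem with $H$ nonsingular for real-analyticity of $x^{\ast}(z)$. One small slip: the bound $f(x)\geq\sum_i(e^{x_i}-x_i)$ does not come from your remark $\tr(e^{A})\geq\max_i(e^{A})_{ii}$, but from $\tr(e^{A})=\sum_i(e^{A})_{ii}\geq\sum_i e^{x_i}$, which is the step the paper uses.
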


\begin{proof}
For $A,D\in\Sym$, differentiating the power series termwise and using
cyclicity of the trace gives
$\frac{d}{dt}\tr e^{A+tD}|_{t=0}=\tr(e^{A}D)$; with $D=e_ie_i'$ this
yields the gradient. For the Hessian, the Fr\'echet derivative of the
exponential is
$\frac{d}{dt}e^{A+tD}|_{t=0}=\int_{0}^{1}e^{sA}De^{(1-s)A}ds$, so with
$D=\diag(u)$,
\begin{equation}\label{eq:quadform}
u'H(x)u=\int_{0}^{1}\tr\big(De^{sA}De^{(1-s)A}\big)ds
=\int_{0}^{1}\sum_{i,j}u_iu_j(e^{sA})_{ij}(e^{(1-s)A})_{ij}ds ,
\end{equation}
identifying $H$. For each $s$ the matrices $e^{sA}$ and $e^{(1-s)A}$ are
positive definite, so their Hadamard product is positive definite by the
Schur product theorem \citep[Theorem~5.2.1]{HornJohnson:1991}; hence
$H(x)\succ0$ and $f$ is strictly convex. For coercivity, let
$A=Q\Lambda Q'$ with $Q=(q_{ik})$ orthonormal; since the weights
$(q_{ik}^{2})_k$ sum to one, Jensen's inequality gives
$(e^{A})_{ii}=\sum_k q_{ik}^{2}e^{\lambda_k}\geq e^{A_{ii}}$, so
$f(x)\geq\sum_i(e^{x_i}-x_i)\to\infty$ as $\|x\|\to\infty$. A coercive
continuous function attains a minimum, which by strict convexity is the
unique stationary point, and by the gradient formula stationarity is
exactly $\diag(e^{A[x]})=\one$. Since $H\succ0$ everywhere, the implicit
function theorem applied to $(x,z)\mapsto\diag(e^{A[x;z]})-\one$ shows
that $x^{\ast}(z)$ is real-analytic.
\end{proof}

\begin{corollary}[\citealp{ArchakovHansen:Correlation}]\label{cor:bijection}
The mapping $\gamma:\Cn\to\R^{d}$, $\gamma(C)=\vecl(\log C)$, is a
bijection, and $\gamma^{-1}(z)=\exp(A[x^{\ast}(z);z])$ is real-analytic.
\end{corollary}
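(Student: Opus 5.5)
The plan is to obtain the bijection by combining the matrix exponential/logarithm correspondence with Theorem~\ref{thm:var}, checking injectivity and surjectivity separately, and then to read off analyticity of the inverse as a composition.

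First, $\gamma$ is well defined. If $C\in\Cn$, then $C\in\SPD$, so $\log C\in\Sym$ is the unique symmetric logarithm (recall that $\exp:\Sym\to\SPD$ is a bijection), and $\vecl(\log C)\in\R^{d}$.

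For surjectivity, fix $z\in\R^{d}$. Theorem~\ref{thm:var} gives $x^{\ast}(z)$ with $\diag(e^{A[x^{\ast}(z);z]})=\one$. Set $C=\exp(A[x^{\ast}(z);z])$. Then $C\in\SPD$ as the exponential of a symmetric matrix, and it has unit diagonal, so $C\in\Cn$. Moreover $\log C=A[x^{\ast}(z);z]$, so $\gamma(C)=\vecl(A[x^{\ast};z])=z$.

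For injectivity, suppose $\gamma(C_1)=\gamma(C_2)=z$. Each $\log C_k$ is symmetric with below-diagonal part $z$, so $\log C_k=A[x_k;z]$, where $x_k=\diag(\log C_k)$. Since $\diag(e^{A[x_k;z]})=\diag(C_k)=\one$, each $x_k$ solves the stationarity equation. Theorem~\ref{thm:var} says this equation has a unique solution, so $x_1=x_2=x^{\ast}(z)$. Hence $\log C_1=\log C_2$, and $C_1=C_2$ because $\exp$ is injective on $\Sym$. This argument also identifies the inverse: $\gamma^{-1}(z)=\exp(A[x^{\ast}(z);z])$.

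Analyticity is a composition argument.
\begin{itemize}
\item The map $z\mapsto x^{\ast}(z)$ is real-analytic by Theorem~\ref{thm:var}.
\item The map $(x,z)\mapsto A[x;z]$ is linear.
\item The matrix exponential is entire, given by a convergent power series.
\end{itemize}
A composition of real-analytic maps is real-analytic, so $\gamma^{-1}$ is real-analytic.

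There is no real obstacle here. The substantive content, existence and uniqueness of the diagonal and analyticity of $x^{\ast}$ via $H\succ0$, was already established in Theorem~\ref{thm:var}. The only point needing care is the use of uniqueness of the symmetric logarithm, which makes $\diag(\log C)$ a genuine solution of the stationarity equation in the injectivity step.
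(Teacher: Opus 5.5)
Your proposal is correct and follows the paper's own reasoning. The paper gives no separate proof: it reads the corollary off Theorem~\ref{thm:var}, using the identification of $\Cn$ with $\{A\in\Sym:\diag(e^{A})=\one\}$ through the exponential bijection $\Sym\to\SPD$, and obtains analyticity of $\gamma^{-1}$ by composing $z\mapsto x^{\ast}(z)$ with the matrix exponential. You make explicit the surjectivity and injectivity steps that the paper leaves implicit.
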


The bijection was first established by
\citet{ArchakovHansen:Correlation} through a contraction argument, and
it is also implied by the mixed-parametrization theory of
\citet{Zwiernik:2025}: for $F(\Sigma)=\tr(\Sigma\log\Sigma-\Sigma)$,
whose gradient is $\log\Sigma$ and whose conjugate is
$F^{\ast}(L)=\tr(e^{L})$, \citet[Theorem~5.4, Remark~5.6, and
Section~5.2]{Zwiernik:2025} obtain the inverse by minimizing the
Bregman divergence $F(S)+F^{\ast}(L)-\langle L,S\rangle$ over
$L=A[x;z]$ for any fixed $S$ with unit diagonal, and since
$\langle A[x;z],S\rangle=\one'x+\mathrm{const}$ this is
\begin{equation}\label{eq:varchar}
x^{\ast}(z)=\arg\min_{x\in\R^{n}}\ \tr\big(e^{A[x;z]}\big)-\one'x ,
\end{equation}
the program of Theorem~\ref{thm:var}. For such problems
\citet[Section~6]{Zwiernik:2025} proposes general-purpose first-order
methods, projected gradient descent and iterative Bregman projection,
noting that second-order information is hard to obtain for most
choices of $F$ and that every step costs a spectral decomposition. For
the unit-diagonal case the direct proof above supplies that
second-order information: the Hessian representation makes
\eqref{eq:varchar} an unconstrained, strictly convex program in $n$
variables with explicit curvature, which the remainder of the paper
exploits.

\section{Global spectral bounds on the Hessian}\label{sec:spectral}

Throughout this section fix $x\in\R^{n}$ and write $E=e^{A[x]}$,
$d=\diag(E)$ and $D=\diag(d)$, the diagonal matrix with the elements of
$d$ on its diagonal; $\operatorname{cond}$ denotes the spectral
condition number, the ratio of extreme eigenvalues. Parts of the
following theorem are implicit in the appendix of
\citet{ArchakovHansen:Correlation}, which obtained the Jacobian
$I-D^{-1}H(x)$ of the fixed point, represented $H(x)$ as a principal
submatrix of $(Q\otimes Q)\Xi(Q\otimes Q)'$ following
\citet{LintonMcCrorie:1995}, and showed in its Lemma~A.3 that
$I-D^{-1/2}H(x)D^{-1/2}$ is positive semidefinite; the row-sum
identity, the lower bound and the diagonal bound can be read off from
those formulas. The theorem synthesizes these facts, proves them
through the integral representation of Theorem~\ref{thm:var}, and
adds the attained upper bound and the condition-number inequalities.
Sharper bounds that depend on the spectral spread of $e^{A[x]}$, and
the duality theory behind the variational formulation, are developed
in \citet{ArchakovHansen:Legendre}.

\begin{theorem}\label{prop:spec}
For every $x\in\R^{n}$,
\begin{itemize}
\item[(i)] $H(x)\one=d$;
\item[(ii)] $\lambda_{\min}(E)I\preceq H(x)\preceq D\preceq\lambda_{\max}(E)I$;
\item[(iii)] $D^{-1/2}H(x)D^{-1/2}\preceq I$, with equality attained:
$D^{-1/2}H(x)D^{-1/2}$ has largest eigenvalue one, with eigenvector
$D^{1/2}\one$; consequently
\[
\operatorname{cond}\{H(x)\}\leq\operatorname{cond}(E),
\qquad
\operatorname{cond}\{D^{-1/2}H(x)D^{-1/2}\}\leq\operatorname{cond}(E).
\]
\end{itemize}
At the solution $x^{\ast}$, where $d=\one$ and $E=C$, these specialize
to $H^{\ast}\one=\one$, $\lambda_{\max}(H^{\ast})=1$, and
$\lambda_{\min}(H^{\ast})\geq\lambda_{\min}(C)$, with
$H^{\ast}=H(x^{\ast})$.
\end{theorem}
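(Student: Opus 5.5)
We work throughout with the integral representation $H=\int_0^1 e^{sA}\circ e^{(1-s)A}\,ds$ from Theorem~\ref{thm:var}, and diagonalize $A=A[x]=Q\Lambda Q'$ so that every entry of $H$ is explicit. For (i), the $i$th entry of $(e^{sA}\circ e^{(1-s)A})\one$ is $\sum_j(e^{sA})_{ij}(e^{(1-s)A})_{ij}=(e^{sA}e^{(1-s)A})_{ii}=E_{ii}$, using symmetry of $e^{(1-s)A}$. This holds for every $s$, so integrating over $s$ gives $H\one=d$.

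For (ii) we write $H$ spectrally. Using $(e^{sA})_{ij}=\sum_k q_{ik}q_{jk}e^{s\lambda_k}$, we get
\[
u'Hu=\sum_{k,l}\Big(\sum_i u_iq_{ik}q_{il}\Big)^2\,\phi(\lambda_k,\lambda_l),
\qquad
\phi(a,b)=\int_0^1e^{sa+(1-s)b}ds .
\]
This is the divided difference $(e^a-e^b)/(a-b)$, and it satisfies $\min(e^a,e^b)\le\phi(a,b)\le\max(e^a,e^b)$. Its logarithmic-mean form also gives $\phi(a,b)\le(e^a+e^b)/2$. Set $w_{kl}=(\sum_iu_iq_{ik}q_{il})^2$. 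Orthonormality of $Q$ gives $\sum_{k,l}w_{kl}=\|Q'\diag(u)Q\|_F^2=\|u\|^2$. Bounding $\phi$ below by $\lambda_{\min}(E)$ then gives $u'Hu\ge\lambda_{\min}(E)\|u\|^2$.

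For the upper bound $H\preceq D$, we use $\phi\le(e^{\lambda_k}+e^{\lambda_l})/2$ together with symmetry of $w$, which reduces the claim to $\sum_{k,l}w_{kl}e^{\lambda_k}$. This sum equals $\tr(\diag(u)Q\,\diag(e^\lambda)\,Q'\diag(u))$ after expanding the square, since $\sum_l q_{il}q_{jl}=\delta_{ij}$. Hence it equals $\sum_iu_i^2E_{ii}=u'Du$. An alternative route is Cauchy--Schwarz on $\tr(De^{sA}De^{(1-s)A})$. Finally $D\preceq\lambda_{\max}(E)I$ holds because each $E_{ii}=e_i'Ee_i\le\lambda_{\max}(E)$.

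For (iii), $H\preceq D$ is equivalent to $D^{-1/2}HD^{-1/2}\preceq I$, using $d>0$. By (i), $D^{-1/2}HD^{-1/2}(D^{1/2}\one)=D^{-1/2}d=D^{1/2}\one$, so the bound $1$ is attained with eigenvector $D^{1/2}\one$. The condition-number bounds then follow. For $H$, (ii) gives $\operatorname{cond}(H)\le\lambda_{\max}(E)/\lambda_{\min}(E)$. For the scaled matrix, the largest eigenvalue is $1$, and the smallest is at least $\lambda_{\min}(E)/\max_iE_{ii}\ge\lambda_{\min}(E)/\lambda_{\max}(E)$, because $D^{-1/2}HD^{-1/2}\succeq\lambda_{\min}(E)D^{-1}\succeq(\lambda_{\min}(E)/\max_id_i)I$. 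The specialization at $x^\ast$ is immediate from $d=\one$, $D=I$ and $E=C$.

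The main obstacle is the step $H\preceq D$. Everything rests on verifying the identity $\sum_{k,l}w_{kl}e^{\lambda_k}=u'Du$ and the mean inequality $\phi(a,b)\le(e^a+e^b)/2$. The latter follows from convexity of $s\mapsto e^{sa+(1-s)b}$ on $[0,1]$, which bounds it above by the chord.
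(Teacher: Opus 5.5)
Your proof is correct and is essentially the paper's: the paper likewise expands $\tr\{\diag(u)e^{sA}\diag(u)e^{(1-s)A}\}$ in the eigenbasis with weights $(Q'\diag(u)Q)_{kl}^{2}$ and bounds it by its common endpoint value $\tr\{\diag(u)^{2}E\}=u'Du$ using convexity of each term in $s$. That is exactly your termwise chord bound combined with your identity $\sum_{k,l}w_{kl}e^{\lambda_k}=u'Du$, and the paper argues (i) and (iii) as you do; the only difference is the lower bound, where it uses $\tr(PN)\geq\lambda_{\min}(N)\tr(P)$ for $P\succeq0$ instead of your equivalent termwise bound $\phi(\lambda_k,\lambda_l)\geq e^{\lambda_{\min}(A)}$ with $\sum_{k,l}w_{kl}=\|u\|^{2}$.
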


\begin{proof}
(i) For symmetric $M,N$, $(M\circ N)\one=\diag(MN)$, so
$H(x)\one=\int_{0}^{1}\diag(e^{sA}e^{(1-s)A})ds=\diag(E)=d$.

(ii) For $u\in\R^{n}$ and $D_u=\diag(u)$, write the integrand of
\eqref{eq:quadform} as
\[
\varphi(s)=\tr(D_ue^{sA}D_ue^{(1-s)A})
=\sum_{k,l}(Q'D_uQ)_{kl}^{2}e^{s\lambda_k+(1-s)\lambda_l},
\]
with $A=Q\Lambda Q'$. Each summand is convex in $s$ and
$\varphi(0)=\varphi(1)=\tr(D_u^{2}E)$, so $\varphi(s)\leq\varphi(0)$ on
$[0,1]$ and
\[
u'H(x)u=\int_{0}^{1}\varphi(s)ds\leq\tr(D_u^{2}E)
=\sum_i u_i^{2}d_i=u'Du ,
\]
which is $H(x)\preceq D$; and $D\preceq\lambda_{\max}(E)I$ because
$d_i=e_i'Ee_i\leq\lambda_{\max}(E)$. For the lower bound,
$\varphi(s)=\tr\{(D_ue^{sA}D_u)e^{(1-s)A}\}
\geq\lambda_{\min}(e^{(1-s)A})\tr(D_u^{2}e^{sA})
\geq e^{(1-s)\lambda_{\min}}e^{s\lambda_{\min}}\|u\|^{2}
=\lambda_{\min}(E)\|u\|^{2}$, with
$\lambda_{\min}=\lambda_{\min}(A)$.

(iii) By (ii), $D^{-1/2}H(x)D^{-1/2}\preceq I$, and by (i),
$D^{-1/2}H(x)D^{-1/2}(D^{1/2}\one)=D^{-1/2}H(x)\one=D^{-1/2}d
=D^{1/2}\one$, so one is an eigenvalue, necessarily the largest. Both
condition-number bounds combine the two sides of (ii), the second using
$D^{-1/2}H(x)D^{-1/2}\succeq\lambda_{\min}(E)D^{-1}\succeq
\{\lambda_{\min}(E)/\lambda_{\max}(E)\}I$.
\end{proof}

Theorem~\ref{prop:spec} has two consequences for the minimization of
\eqref{eq:f}. First, consider the fixed-point iteration of
\citet{ArchakovHansen:Correlation}, hereafter GFT-FP,
\begin{equation}\label{eq:fp}
x\ \leftarrow\ x-\log\diag\big(e^{A[x]}\big).
\end{equation}
Its Jacobian at $x$ is $I-D^{-1}H(x)$, hence $I-H^{\ast}$ at the
solution; GFT-FP is thus a quasi-Newton method in which $H^{\ast}$ is
approximated by the identity. By
Theorem~\ref{prop:spec}, $I-H^{\ast}\succeq0$ with spectral radius
\begin{equation}\label{eq:rate}
\rho(I-H^{\ast})=1-\lambda_{\min}(H^{\ast})\leq1-\lambda_{\min}(C)<1 ,
\end{equation}
so the iteration is locally convergent for every $z$, a fact established
globally by \citet{ArchakovHansen:Correlation}. The spectral radius in
\eqref{eq:rate} is the sharp worst-case local factor: the asymptotic
factor along a trajectory can be smaller when the error avoids the slow
eigendirections, as under symmetric initialization for equicorrelation
matrices. The local linearized worst-case count to
tolerance $\epsilon$ has denominator
$-\log\{1-\lambda_{\min}(H^{\ast})\}$ and also depends on the
starting error; for small $\lambda_{\min}(H^{\ast})$ its leading
scaling is $|\log\epsilon|/\lambda_{\min}(H^{\ast})$. This scaling is
realized in practice: Section~\ref{sec:numerics} reports a design class
whose median worst-case factor corresponds to
$\lambda_{\min}(H^{\ast})=0{\cdot}028$, with a median count of
$1{,}008$ iterations, consistent with
$|\log10^{-13}|/0{\cdot}028\approx1{,}100$; Figure~\ref{fig:conv}(c)
illustrates the predicted scaling in the ill-conditioned regime. The bound \eqref{eq:rate} does not itself imply slow convergence, but
small $\lambda_{\min}(H^{\ast})$ is commonly observed for
near-singular $C$, as documented by \citet{ArchakovHansen:Correlation}
and \citet{ChenFeiYu:2025}.

Second, part (iii) holds at every $x$, not only at the solution, so
the Newton system $H(x)\delta=-\nabla f(x)$ is never worse conditioned
than $e^{A[x]}$, before or after diagonal scaling; this justifies
conjugate gradients with the diagonal preconditioner $D$ at every
iterate.

\section{The algorithm}\label{sec:algorithm}

All methods for solving \eqref{eq:varchar} share one $O(n^{3})$
kernel, the eigendecomposition
$A[x]=Q\Lambda Q'$, which delivers $e^{A[x]}$, $f$, $\nabla f$ and the
action of $H$; we therefore report costs in eigendecompositions and
Hessian-vector products. Let $L$ denote the matrix of divided
differences of the exponential,
$L_{kl}=(e^{\lambda_k}-e^{\lambda_l})/(\lambda_k-\lambda_l)$ when
$\lambda_k\neq\lambda_l$, with the continuous value
$L_{kl}=e^{\lambda_k}$ whenever $\lambda_k=\lambda_l$, including
repeated eigenvalues with $k\neq l$. The Daleckii--Krein formula gives,
for $v\in\R^{n}$,
\begin{equation}\label{eq:hv}
H(x)v=\diag\big[Q\{L\circ(Q'\diag(v)Q)\}Q'\big],
\end{equation}
two matrix multiplications per application; unqualified norms below are
Euclidean. Forming the full Hessian costs $O(n^{4})$, the Jacobian cost
paid at every iteration by Newton's method and once by Broyden's method
in \citet{ChenFeiYu:2025}; the algorithm below never forms it.

Two further consequences of Theorem~\ref{prop:spec}(i) are used
below. Since $A[x+c\one]=A[x]+cI$ for $c\in\R$, the eigenvectors of
$A[x]$ do not depend on $c$, its eigenvalues increase by $c$, and,
writing $\ell(x)=\log\diag(e^{A[x]})$,
\begin{equation}\label{eq:shift}
\ell(x+c\one)=\ell(x)+c\one,\qquad
D(x+c\one)^{-1}H(x+c\one)=D(x)^{-1}H(x),
\end{equation}
where $D^{-1}H$ is the Jacobian of $\ell$, which maps $\one$ to
$\one$ by Theorem~\ref{prop:spec}(i). The residual $\ell$, and with
it the fixed-point step \eqref{eq:fp}, is therefore exactly affine
along $\one$. Moreover, $f(x+c\one)=e^{c}\tr e^{A[x]}-\one'x-nc$ is
minimized exactly at $c=-s(x)$, with $s(x)=\log\{\tr(e^{A[x]})/n\}$,
and the normalized point $N(x)=x-s(x)\one$ satisfies
\begin{equation}\label{eq:norm}
\tr e^{A[N(x)]}=n,\qquad
f(x)-f\{N(x)\}=n\{e^{s(x)}-1-s(x)\}\geq0.
\end{equation}
Normalization needs no new eigendecomposition: $s(x)$ is a
log-sum-exp of the eigenvalues, the eigenvectors are unchanged, the
eigenvalues and $\ell$ decrease by $s(x)$, and $D$, $H$, and $L$ are
multiplied by $e^{-s(x)}$. At a normalized point
$\lambda_{\max}(A[x])\leq\log n$ and $f=n-\one'x$, so the objective
is evaluated without overflow wherever the iteration goes.

The algorithm is arranged so that every quantity is computed accurately
in floating point. It works with $\ell(x)$, obtained directly from
$(\Lambda,Q)$ without forming an exponential that could overflow, and
every evaluated point is normalized by \eqref{eq:norm}, so that
$\tr e^{A[x]}=n$ and the trace term never overflows; fixed-point
steps are taken only while some $\ell_i<-700$ (Section~\ref{sec:s2}).
Two safeguards, stated in the algorithm, handle the neighbourhood of
the solution: the full Newton step is taken untested when the
predicted decrease falls below what can be certified numerically, and
fixed-point steps complete the computation if progress stalls at the
rounding floor of $\diag(e^{A[x]})$, where they remain effective. The
cancellation-free evaluation of $\ell$, $g$, $L$, and $f$, and the
rationale for the two safeguards, are detailed in
Section~\ref{sec:s2}.

Step~4 computes a Newton step for the equation $\ell(x)=0$ that the
fixed point solves, rather than for $\nabla f(x)=0$. The Jacobian of
$\ell$ is $D^{-1}H$, so the step solves $H\delta=-D\ell$, with the
same matrix, preconditioner, and cost per conjugate-gradient iteration
as $H\delta=-g$; the two right-hand sides agree to first order at
$x^{\ast}$, since $g=D\ell+O(\|\ell\|^{2})$. By \eqref{eq:shift} the
linear model $\ell+D^{-1}H\delta$ is exact along $\one$: from
$x^{\ast}+c\one$ the full step returns to $x^{\ast}$ for every $c$,
whereas the step solving $H\delta=-g$ moves by $-(1-e^{-c})\one$. With
preconditioner $D$, the preconditioned residual $D^{-1}r$ is the
residual of the linearized equation, so the forcing test measures the
inner solve in the units of the outer residual. With forcing tolerance
$\eta_k=\min(1/2,\|\ell_k\|)$ the Newton phase converges locally
quadratically when the forcing test is met within the product cap and
full steps are accepted \citep{DemboEisenstatSteihaug:1982}, as they
are near $x^{\ast}$; the choice $\eta_k=\min(1/2,\|\ell_k\|^{1/2})$
would give order $3/2$ \citep{EisenstatWalker:1996}. Unlike
$H\delta=-g$, the system $H\delta=-D\ell$ need not yield a descent
direction for $f$ far from $x^{\ast}$ (an example is given in
Section~\ref{sec:variants}). Step~5 therefore tests descent first and
otherwise takes a fixed-point step; this never occurred in the
experiments of Section~\ref{sec:numerics}. Each tested Newton step is
an Armijo descent step for $f$, and so is the fixed-point step.

The diagonal preconditioner can be replaced by a curvature model built
from the same eigendecomposition. Since
$H=\int_0^1e^{tA}\circ e^{(1-t)A}dt$ by \eqref{eq:quadform}, an
$r$-point Gauss--Legendre rule $(t_j,a_j)$ on $[0,1]$ gives
\begin{equation}\label{eq:Mr}
M_r=\sum_{j=1}^{r}a_j\,e^{t_jA}\circ e^{(1-t_j)A},\qquad A=A[x],
\end{equation}
which is positive definite by the Schur product theorem and costs $r$
matrix multiplications, those forming $e^{t_jA}=Q\diag(e^{t_j\lambda})Q'$,
and one Cholesky factorization. Because Gauss--Legendre quadrature
underestimates $\int_0^1e^{tu}dt$ for every real $u$,
\begin{equation}\label{eq:cert}
M_r\preceq H\preceq\phi_r(\Delta)\,M_r,\qquad
\Delta=\lambda_{\max}(A[x])-\lambda_{\min}(A[x]),
\end{equation}
where $\phi_r$ is an explicit scalar function of the spectral spread
alone (Section~\ref{sec:quad}); at the solution
$\Delta=\log\operatorname{cond}(C)$. Two nodes certify a condition
number of at most two for the preconditioned system whenever
$\operatorname{cond}(C)\leq1{\cdot}1\times10^{5}$, and three nodes
whenever $\operatorname{cond}(C)\leq5\times10^{9}$, so the order
follows from the extreme eigenvalues at no extra cost. Conjugate
gradients preconditioned by the Cholesky factor of $M_r$ then need two
to four products per Newton step, and the forcing test is still
applied to $D^{-1}r$, so the outer iteration is unchanged. The
construction pays whenever two or more nodes are needed, that is
whenever the diagonal preconditioner is slow, and increasingly with
$n$ as products become expensive relative to eigendecompositions:
one-factor designs are inverted $1{\cdot}2$ times faster at $n=100$
and $1{\cdot}5$ times faster at $n=800$. Section~\ref{sec:quad} gives
the selection rule and the evidence; the diagonal preconditioner
remains the default.

\begin{proposition}\label{prop:descent}
For every $x\in\R^{n}$, with $\ell=\ell(x)$,
\[
f(x-\ell)\leq f(x)-\sum_{i=1}^{n}(e^{\ell_{i}}-1-\ell_{i}),
\]
and the subtracted amount is strictly positive unless $x=x^{\ast}(z)$.
\end{proposition}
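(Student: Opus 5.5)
The plan is to compare the two sides directly, using the Golden--Thompson inequality $\tr e^{B+C}\leq\tr(e^{B}e^{C})$ for symmetric $B,C$ to control the trace term after the step.

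First I would rewrite $f(x)$ in terms of $\ell$. Write $A=A[x]$ and $d=\diag(e^{A})$. By definition $\ell_i=\log d_i$, so $d_i=e^{\ell_i}$. Then
\[
f(x)=\sum_i e^{\ell_i}-\one'x .
\]
Next I would describe the stepped matrix. The step $x\mapsto x-\ell$ changes only the diagonal and leaves $\vecl$ equal to $z$. Hence $A[x-\ell]=A-\diag(\ell)$, and
\[
f(x-\ell)=\tr e^{A-\diag(\ell)}-\one'x+\one'\ell .
\]

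The key step is the trace bound. Apply Golden--Thompson with $B=A$ and $C=-\diag(\ell)$. Since $e^{-\diag(\ell)}$ is diagonal,
\[
\tr e^{A-\diag(\ell)}\leq\tr\bigl(e^{A}e^{-\diag(\ell)}\bigr)=\sum_i d_ie^{-\ell_i}=\sum_i 1=n .
\]
This is the only non-elementary ingredient. It is a classical result, so I would cite it rather than prove it, and I expect no real obstacle here. Substituting gives
\[
f(x-\ell)\leq n-\one'x+\one'\ell=f(x)-\sum_i\bigl(e^{\ell_i}-1-\ell_i\bigr),
\]
which is the claimed inequality.

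For strict positivity, note that $e^{t}-1-t\geq0$ with equality only at $t=0$, by strict convexity of the exponential. The subtracted sum therefore vanishes exactly when $\ell=0$, that is, when $\diag(e^{A[x]})=\one$. By Theorem~\ref{thm:var} this equation has the unique solution $x=x^{\ast}(z)$, so the sum is strictly positive at every other $x$.
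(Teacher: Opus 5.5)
Your proof is correct and is essentially the paper's argument. Golden--Thompson with $C=-\diag(\ell)$ bounds $\tr e^{A[x-\ell]}$ by $\sum_i d_i e^{-\ell_i}=n$; comparing with $f(x)=\sum_i e^{\ell_i}-\one'x$, and then using $e^{t}-1-t>0$ for $t\neq0$ together with the uniqueness in Theorem~\ref{thm:var}, gives both claims exactly as the paper does.
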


\begin{proof}
The Golden--Thompson inequality $\tr(e^{B+C})\leq\tr(e^{B}e^{C})$ for
symmetric $B,C$, with $C=-\diag(\ell)$ diagonal, gives
$\tr(e^{A[x-\ell]})\leq\sum_{i}(e^{A[x]})_{ii}e^{-\ell_{i}}=n$.
Subtract $\one'(x-\ell)$ and compare with
$f(x)=\sum_{i}e^{\ell_{i}}-\one'x$; positivity is $e^{t}-1-t>0$ for
$t\neq0$, and $\ell=0$ characterizes $x^{\ast}(z)$.
\end{proof}

Telescoping the bound gives
$\sum_{k}\sum_{i}(e^{\ell_{ki}}-1-\ell_{ki})\leq f(x_{0})-\min f
<\infty$, so $\ell_{k}\to0$; coercivity makes the iterates
precompact, and uniqueness of the stationary point identifies every
cluster point with $x^{\ast}(z)$: a short new proof of the global
convergence of GFT-FP, established with a contraction argument by
\citet{ArchakovHansen:Correlation}. Thus every fixed-point step and
every tested Newton step decreases the same objective; only the
untested Newton steps of step~5 are not guaranteed to decrease $f$.

The same argument covers any acceleration safeguarded by the decrease
in Proposition~\ref{prop:descent}. Let
$V(x)=\sum_i(e^{\ell_i}-1-\ell_i)$ and fix $\theta\in(0,1)$. Suppose
a trial point $y$, produced by Anderson mixing, a Newton step, or any
other rule, is accepted only when $f(y)\leq f(x)-\theta V(x)$, and the
fixed-point step is taken otherwise. Then every step decreases $f$ by
at least $\theta V(x)$, and the iterates converge to $x^{\ast}(z)$
from every starting point (Section~\ref{sec:variants}). The safeguard does
not slow local convergence: near $x^{\ast}$ the full Newton step of
step~4 decreases $f$ by
$\tfrac12\ell'H^{\ast-1}\ell+O(\|\ell\|^{3})\geq\tfrac12\|\ell\|^{2}+O(\|\ell\|^{3})$,
because $\lambda_{\max}(H^{\ast})=1$, while
$V(x)=\tfrac12\|\ell\|^{2}+O(\|\ell\|^{3})$.
Section~\ref{sec:tools} uses the safeguard for Anderson acceleration.

\begin{center}
\fbox{\parbox{0.93\textwidth}{
\textbf{The GFT-FP+N algorithm.} Input $z\in\R^{d}$; set $x=0$,
$t_{\mathrm{prev}}=1$, and $\texttt{tol}=10^{-13}$.
\begin{enumerate}\itemsep1pt
\item Eigendecompose $A[x]=Q\Lambda Q'$; compute
$\ell=\log\diag(e^{A[x]})$ from $(\Lambda,Q)$. Normalize: with
$s=\log(\sum_k e^{\lambda_k}/n)$, set $x\leftarrow x-s\one$,
$\Lambda\leftarrow\Lambda-sI$, and $\ell\leftarrow\ell-s\one$.
\item If $\min_i\ell_i<-700$: fixed-point step $x\leftarrow x-\ell$;
set $t_{\mathrm{prev}}=1$; go to 1.
\item Set $g=e^{\ell}-\one$, evaluated without cancellation; stop if
$\|g\|_{\infty}<\texttt{tol}$, returning $C=e^{A[x]}$. If
$\|g\|_{\infty}<10^{-9}$ has failed to halve over three consecutive
iterations, enter terminal mode; in terminal mode take the fixed-point
step $x\leftarrow x-\ell$ and go to 1.
\item Solve $H\delta=-D\ell$, $D=\diag(e^{\ell})$, by conjugate
gradients started at $\delta=0$, with residual $r=H\delta+D\ell$,
diagonal preconditioner $D$ (or, when selected by the rule of
Section~\ref{sec:quad}, the quadrature preconditioner \eqref{eq:Mr},
with the same stopping test) and products \eqref{eq:hv}, stopping when
$\|D^{-1}r\|\leq\eta\|\ell\|$ with $\eta=\min(1/2,\|\ell\|)$,
capped at $2n$ products.
\item If $\delta$ is not finite or $g'\delta\geq0$: fixed-point step,
$t_{\mathrm{prev}}=1$; go to 1. If $|g'\delta|\leq10^{-12}(1+|f|)$
and $\|\ell\|_{\infty}<10^{-3}$, with $f=\sum_k e^{\lambda_k}-\sum_i x_i$:
set $x\leftarrow x+\delta$; go to 1. Otherwise backtrack on $f$ along
$\delta$ by Armijo halving from $t=\min(1,2t_{\mathrm{prev}})$,
testing the normalized trial points $N(x+t\delta)$, and set
$t_{\mathrm{prev}}=t$ on success; if the search fails, substitute a
fixed-point step and set $t_{\mathrm{prev}}=1$. Go to 1.
\end{enumerate}}}
\end{center}

\section{Numerical results}\label{sec:numerics}

We compare four methods: the fixed point \eqref{eq:fp}; Broyden's
method, implemented as published by \citet{ChenFeiYu:2025}, with the
exact Jacobian computed once after one fixed-point step and rank-one
updates thereafter; full Newton, by which we mean GFT-FP+N itself
with the Newton system $H\delta=-D\ell$ solved exactly, the Hessian
formed explicitly at $O(n^{4})$ cost at every iteration, in place of
the matrix-free conjugate gradients, so that the two differ only in
the linear solver; and GFT-FP+N. The version of GFT-FP+N with the
gradient system $H\delta=-g$, square-root forcing, and a fixed-point
phase while $\max_i\ell_i>\log2$ is compared with the algorithm of
Section~\ref{sec:algorithm} in Section~\ref{sec:variants}, which
varies these choices one at a time under otherwise identical rules.
Newton's method in the form of
\citet{ChenFeiYu:2025}, one fixed-point step followed by exact Newton
steps, to which we add a line search, is examined in
Section~\ref{sec:newtonsafe}, and two standard
Jacobian-free tools, Anderson acceleration of the fixed point and
limited-memory BFGS on $f$, in Section~\ref{sec:tools}. All methods run to
$\|\diag(e^{A[x]})-\one\|_{\infty}\leq10^{-13}$ with iteration caps of
$5{,}000$ for the fixed point and for Anderson acceleration and $500$
otherwise. For designs
generated from a known matrix, reconstruction error against that
matrix is of the order of the tolerance; for $z$-generated designs the
stopping criterion itself is the error measure. The designs differ
enormously in conditioning: the median base-ten logarithm of the
condition number of the solution $C$ is $1{\cdot}0$, $2{\cdot}5$ and
$4{\cdot}2$ for the three Toeplitz designs and $1{\cdot}5$ and
$4{\cdot}3$ for the Wishart and one-factor designs, but $23$ and $46$
for $\sigma=2$ and $\sigma=4$. The $z$-generated designs therefore test
the solution of the diagonal-normalization equations to residual
$10^{-13}$, not the accuracy of the returned dense matrix: for
$\sigma\geq2$ the returned matrix, reconstructed as $Qe^{\Lambda}Q'$
with unit diagonal, is no longer numerically positive definite, its
computed spectrum containing negative values of order $10^{-15}$, so
it supports neither Cholesky factorization nor evaluation of its
logarithm. Likelihood applications operate in the
well-conditioned regime of the first five designs, where the returned
matrix is accurate to the tolerance. Test
problems comprise the Toeplitz design $C_{ij}=\rho^{|i-j|}$ of
\citet{ChenFeiYu:2025} with $\rho\in\{0{\cdot}5,0{\cdot}9,0{\cdot}99\}$;
Wishart-type matrices $C=\Delta S\Delta$, $S=XX'$ with $X$ an
$n\times2n$ standard Gaussian matrix and $\Delta$ the diagonal
normalization; one-factor matrices
$C=\beta\beta'+\diag(1-\beta^{2})$ with
$\beta_i\sim\mathrm{Un}(0{\cdot}85,0{\cdot}999)$ at $n=100$ and
$\mathrm{Un}(0{\cdot}8,0{\cdot}995)$ at $n=800$; and unstructured
vectors $z$ with independent $N(0,\sigma^{2})$ elements,
$\sigma\in\{2,4\}$. Random designs use $1{,}000$ independent draws,
$500$ at $n=800$; we report medians over converging replications of the count of
eigendecompositions, of the number of Hessian-vector products for
GFT-FP+N, and of the wall time, together with the number of failures
to converge; on the deterministic Toeplitz designs the wall time is
the minimum of five repeated timings. In Table~\ref{tab:bench} a dash marks a method
not run at that dimension, owing to its Jacobian cost. Computations use double
precision in Julia 1.13.0 with single-threaded OpenBLAS on an Apple
M2 Max processor; because wall times depend on language and BLAS build,
the counts are the primary comparison, and they were reproduced exactly
on the deterministic designs, and closely on the random designs, by an
independent NumPy implementation on serialized inputs; seeds and
replication code are described in Section~\ref{sec:s1}.

\begin{table}
\centering
\caption{For each method, the median number of eigendecompositions and
the median wall time per inversion (in parentheses; minimum of five
timings on the Toeplitz designs) to reach tolerance
$10^{-13}$, with failures as superscripts (out of 1{,}000 draws; 500 at
$n=800$); for GFT-FP+N also the median number of Hessian--vector
products (in braces); the final column is the median worst-case local
factor of the fixed point.}\label{tab:bench}
\medskip
\footnotesize
\setlength{\tabcolsep}{3pt}
\begin{tabular}{lccccc}
\toprule
Case & GFT-FP & Broyden & Full Newton & GFT-FP+N & $1-\lambda_{\min}(H^{\ast})$\\
\midrule
Toeplitz $\rho=0{\cdot}5$, $n=100$  & 15 (11ms) & 5 (10ms) & 4 (20ms) & 5 \{15\} (6ms) & $0{\cdot}16$\\
Toeplitz $\rho=0{\cdot}9$, $n=100$  & 41 (31ms) & 8 (12ms) & 5 (26ms) & 6 \{30\} (8ms) & $0{\cdot}55$\\
Toeplitz $\rho=0{\cdot}99$, $n=100$ & 89 (68ms) & 8 (12ms) & 5 (26ms) & 6 \{29\} (8ms) & $0{\cdot}77$\\
Toeplitz $\rho=0{\cdot}99$, $n=300$ & 89 (641ms) & 8 (387ms) & -- & 6 \{28\} (108ms) & $0{\cdot}77$\\
Wishart, $n=100$                    & 15 (12ms) & 6 (10ms) & 4 (18ms) & 5 \{15\} (6ms) & $0{\cdot}13$\\
One-factor, $n=100$                 & 124 (130ms) & 16 (22ms) & 6 (32ms) & 7 \{34\} (11ms) & $0{\cdot}79$\\
$z\sim N(0,4I_d)$, $n=50$           & 473 (92ms) & 38 (8ms)$^{56}$ & 9 (5ms) & 10 \{68\} (3ms) & $0{\cdot}94$\\
$z\sim N(0,16I_d)$, $n=50$          & 1{,}008 (197ms) & 52 (10ms)$^{978}$ & 12 (6ms) & 14 \{93\} (4ms) & $0{\cdot}97$\\
One-factor, $n=800$                 & 136 ($15{\cdot}5$s) & -- & -- & 8 \{35\} ($2{\cdot}3$s) & --\\
\bottomrule
\end{tabular}
\end{table}

We observe three patterns in Table~\ref{tab:bench}. First, on well-conditioned
designs all four methods are inexpensive. Second, as $1-\lambda_{\min}(H^{\ast})$ approaches one the
fixed-point count grows in line with \eqref{eq:rate}, reaching a median
of $1{,}008$ iterations, whereas the counts for the three Newton-type
methods change little. Among these, Broyden's method failed in $56$ of
$1{,}000$ replications at $\sigma=2$ and in $978$ of $1{,}000$ at
$\sigma=4$, apparently because its one-time Jacobian is formed too far
from the solution: given an initial fixed-point phase in the log
domain, its failures drop from $978$ to $167$ of $1{,}000$ at
$\sigma=4$, at three times the cost of GFT-FP+N and still with the
$O(n^{4})$ Jacobian (Section~\ref{sec:s5}); at the tolerance
$10^{-6}$ of \citet{ChenFeiYu:2025}, on the same inputs, its failures
are unchanged (Section~\ref{sec:s4}), so they reflect divergence
rather than the tolerance. Full Newton and GFT-FP+N converged in
every replication of every design, including the $500$ at $n=800$,
where the tolerance $10^{-13}$ sits at the rounding floor
(Section~\ref{sec:s6}). Full Newton takes one or two fewer
eigendecompositions because its Newton directions are exact rather
than truncated, but at a cost per iteration that grows with the
explicit Hessian: it takes $1{\cdot}3$ to $1{\cdot}4$ times longer at
$n=50$ and $2{\cdot}9$ to $3{\cdot}5$ times longer at $n=100$, and it
is not run at $n\geq300$. Quadrature preconditioning \eqref{eq:Mr}
recovers most of this reduction without forming $H$: with two nodes,
the one-factor design at $n=800$ needs $6$ eigendecompositions and
$16$ products in place of $8$ and $35$, and $32$\% less time; on
the designs of Table~\ref{tab:bench} with $n\leq300$ the selection
rule of Section~\ref{sec:quad} leaves the diagonal preconditioner in
place on the two designs where one node certifies the bound and
selects $M_r$ on the others, with gains of up to $1{\cdot}4$ at
$n=100$ (Section~\ref{sec:quad}). Computed from unrounded median times,
GFT-FP+N's wall time was lower than the fixed point's by a factor of
up to $45$ ($\sigma=4$ design) and $6{\cdot}6$ ($n=800$), and lower
than Broyden's by a factor of $3{\cdot}6$ at $n=300$ and $1{\cdot}9$
on the one-factor design at $n=100$. Third,
the local factors in the last column track the observed fixed-point
counts; Figure~\ref{fig:conv}(c), with $1{,}000$ random designs,
illustrates the predicted scaling in the ill-conditioned regime.

Warm starts matter in dynamic applications. In a random-walk sequence of $50$
vectors ($n=100$, one-factor initial matrix, increments of standard
deviation $0{\cdot}002$ per coordinate, aggregate step about
$0{\cdot}14$), sequential inversion with each solution carried
forward and Broyden's Jacobian recomputed at each step required, on
average over ten independent sequences and including the initial
cold-start inversion,
$4{\cdot}8$ eigendecompositions per step for GFT-FP+N, $6{\cdot}2$
for Broyden's method, and $103$ for the fixed point, at mean wall
times per step of $8{\cdot}8$, $11{\cdot}8$, and $105$ milliseconds;
Broyden's Jacobian, formed at every step, accounts for its higher time
at nearly equal counts. In the terms of the applications cited in the
introduction, the inversion component of one likelihood evaluation
over this path of fifty time points, each requiring one inversion,
costs approximately $0{\cdot}44$ seconds with GFT-FP+N, $0{\cdot}59$
with Broyden's method, and $5{\cdot}2$ with the fixed point, the
log-determinant terms being free by Section~\ref{sec:discussion}.
Warm starts thus bring Broyden's method close to GFT-FP+N but do not
rescue the fixed point on ill-conditioned designs.

A first-order predictor lowers these counts further. When $z_t$
arrives, the eigendecomposition at the previous solution $x_{t-1}$ is
still available, and differentiating $\ell\{x^{\ast}(z);z\}=0$ as in
Section~\ref{sec:discussion} gives the prediction
\begin{equation}\label{eq:pred}
\hat{x}_t=x_{t-1}+p,\qquad
H(x_{t-1})\,p=-\{B\Delta z_t+D\ell(x_{t-1})\}.
\end{equation}
Here $\Delta z_t=z_t-z_{t-1}$, and
$B\Delta z_t=\diag[Q\{L\circ(Q'\Delta A_tQ)\}Q']$ is the derivative of
$\diag(e^{A})$ in the direction of the off-diagonal change
$\Delta A_t$, computed with the stored $Q$ and $L$ at the cost of three
matrix multiplications; the term $D\ell$ corrects for the residual
left by the previous solve. The system is solved by conjugate
gradients with preconditioner $D$ and products \eqref{eq:hv} to
relative residual $10^{-3}$, and no eigendecomposition is needed. The
prediction error is $O(h^{2}+\tau+\eta_ph)$ for step size
$h=\|\Delta z_t\|$, previous-solve error $\tau$, and relative solve
tolerance $\eta_p$. In the design above the predictor reduced the mean
count to $3{\cdot}1$ eigendecompositions per step, three per warm
step, and the mean time per step from $8{\cdot}8$ to $6{\cdot}4$
milliseconds, so the inversion component of the likelihood path costs
$0{\cdot}32$ seconds; the gain depends on the step size
(Section~\ref{sec:variants}).

\begin{figure}
\centering
\includegraphics[width=0.98\textwidth]{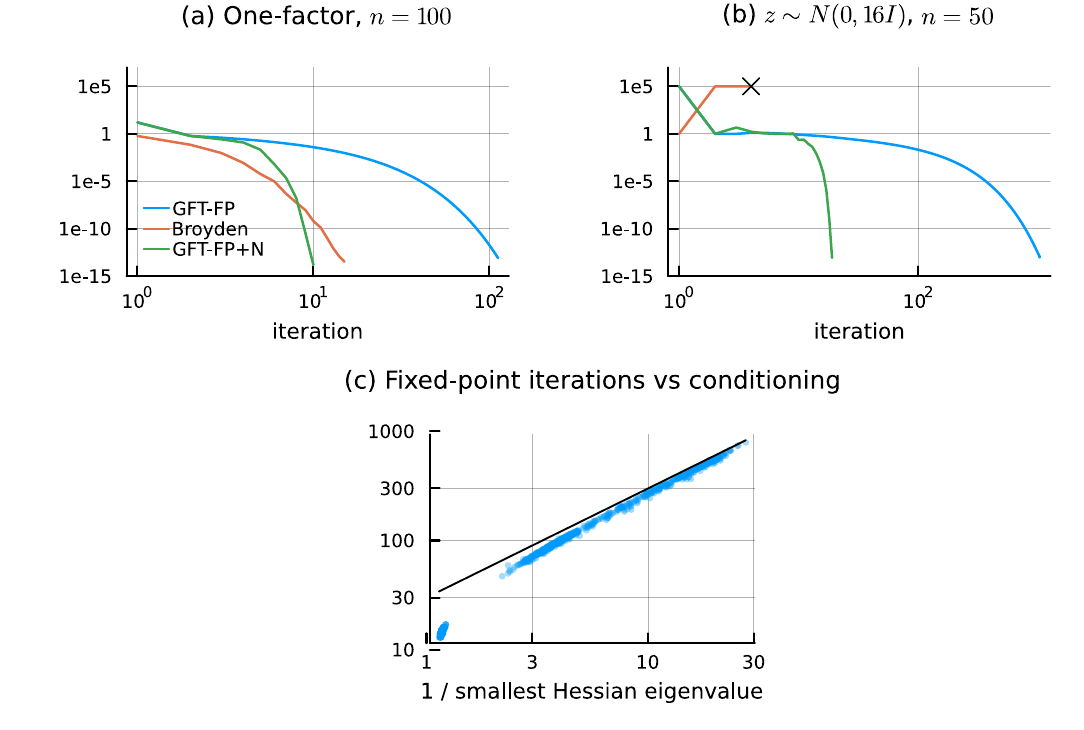}
\caption{Convergence histories, error
$\|\diag(e^{A[x]})-\one\|_{\infty}$ per iteration, logarithmic axes,
for (a) a one-factor design, $n=100$, and (b) $z\sim N(0,16I_d)$, $n=50$,
with a cross marking non-convergence of Broyden's method; and (c)
fixed-point iteration counts against $1/\lambda_{\min}(H^{\ast})$ for
1{,}000 random designs, with $|\log\epsilon|/\lambda_{\min}(H^{\ast})$,
$\epsilon=10^{-13}$, as the solid line.}
\label{fig:conv}
\end{figure}

\subsection{Comparisons on identical inputs}\label{sec:identical}

The four comparisons below use the serialized draws behind
Table~\ref{tab:bench}, so that methods are compared on identical
inputs; each entry reports the median count of eigendecompositions
over converging replications, the median wall time in milliseconds in
parentheses, and the number of failures out of $1{,}000$ as a
superscript.

\subsubsection{Tolerance sensitivity}\label{sec:s4}

Table~\ref{tab:tol} reports the $z\sim N(0,16I_d)$, $n=50$ design with
the same $1{,}000$ serialized input vectors used in
Table~\ref{tab:bench},
solved to tolerances $10^{-6}$, as in \citet{ChenFeiYu:2025}, and
$10^{-13}$, for the fixed point, Broyden's method, Newton with Armijo backtracking
(Section~\ref{sec:newtonsafe}), and GFT-FP+N. Broyden's
method fails in the same $978$ replications at both tolerances, so
the tolerance is not the cause; inspection of the failed runs shows
divergence, with non-finite iterates or residuals growing without
bound. The $94$ failures of Newton with Armijo backtracking at
$10^{-13}$ vanish at
$10^{-6}$, consistent with line-search stagnation at the
floating-point resolution of the objective.

\begin{table}[h]
\centering
\caption{Median eigendecompositions over converging replications, median
wall time in milliseconds (in parentheses) and failures out of
1{,}000 (superscripts), $z\sim N(0,16I_d)$, $n=50$, same inputs at
both tolerances.}\label{tab:tol}
\medskip
\begin{tabular}{lcccc}
\toprule
 & GFT-FP & Broyden & Newton, Armijo & GFT-FP+N\\
\midrule
Tolerance $10^{-6}$  & $440{\cdot}5$ (79) & $35{\cdot}0$ (7)$^{978}$ & $21{\cdot}0$ (7) & $13{\cdot}0$ (3)\\
Tolerance $10^{-13}$ & $1007{\cdot}5$ (181) & $52{\cdot}0$ (10)$^{978}$ & $24{\cdot}0$ (8)$^{94}$ & $14{\cdot}0$ (4)\\
\bottomrule
\end{tabular}
\end{table}

\subsubsection{Broyden's method with an initial fixed-point phase}\label{sec:s5}

The published algorithm of \citet{ChenFeiYu:2025} precedes Broyden
updates with a single fixed-point step. To separate the effect of this
initialization from the intrinsic behaviour of Broyden updates,
Table~\ref{tab:glob} compares the published method with a variant given
an initial fixed-point phase in the log domain, fixed-point steps
until $\max_i\ell_i\leq\log2$ as in the first-draft version of
GFT-FP+N (Section~\ref{sec:variants}), after which the Jacobian is
formed once and rank-one updates proceed as published, without
GFT-FP+N's line search or terminal safeguard
(\texttt{inv\_gft\_broyden} with \texttt{globalized = true}). The
inputs are the same $1{,}000$ serialized draws per design as in
Table~\ref{tab:bench}.

\begin{table}[h]
\centering
\caption{Median eigendecompositions over converging replications, median
wall time in milliseconds (in parentheses) and failures out of
1{,}000 (superscripts), extreme designs with $n=50$, tolerance
$10^{-13}$, same inputs across rows.}\label{tab:glob}
\medskip
\begin{tabular}{lccc}
\toprule
 & Broyden, published & Broyden, globalized & GFT-FP+N\\
\midrule
$z\sim N(0,4I_d)$  & $38{\cdot}0$ (7)$^{56}$  & $29{\cdot}0$ (6)$^{3}$   & $10{\cdot}0$ (3)\\
$z\sim N(0,16I_d)$ & $52{\cdot}0$ (10)$^{978}$ & $43{\cdot}0$ (9)$^{167}$ & $14{\cdot}0$ (4)\\
\bottomrule
\end{tabular}
\end{table}

The shared initial phase removes most of the failures of the published
method, from $978$ to $167$ of $1{,}000$ on the harder design, confirming
that the one-step initialization forms the Jacobian too far from the
solution. A substantial residual failure rate remains, however, and
the variant still requires about three times the
eigendecompositions of GFT-FP+N, in addition to its one-time
$O(n^{4})$ Jacobian, which also caps the feasible dimension. The
comparison thus attributes most, but not all, of Broyden's failures
to its initialization, without altering the ordering of the methods.

\subsubsection{Newton with Armijo backtracking}\label{sec:newtonsafe}

The full-Newton column of Table~\ref{tab:bench} is GFT-FP+N with the
exact linear solver. A second Newton comparator follows the form of
\citet{ChenFeiYu:2025}, one fixed-point step followed by exact Newton
steps, to which we add an Armijo line search on $f$, valid since
$H\succ0$ makes $\delta$ an $f$-descent direction; without it the
iteration diverged on most designs, and with it the comparator is our
adaptation rather than the published algorithm, so its behaviour
should not be read as that of the original. It solves the gradient
system $H\delta=-g$ and has none of the normalization, the
untested-step rule, or the terminal safeguard of GFT-FP+N
(\texttt{inv\_gft\_newton}). Table~\ref{tab:newtonsafe}
places the two side by side on every design with $n\leq100$, on the
same draws as Table~\ref{tab:bench}.

\begin{table}[h]
\centering
\caption{Median eigendecompositions over converging replications, median
wall time in milliseconds (in parentheses) and failures out of 1{,}000
(superscripts), tolerance $10^{-13}$, same inputs across
columns.}\label{tab:newtonsafe}
\medskip
\begin{tabular}{lccc}
\toprule
Case & Newton, Armijo & Full Newton & GFT-FP+N\\
\midrule
Toeplitz $\rho=0{\cdot}5$, $n=100$  & 5 (20) & 4 (20) & 5 (6)\\
Toeplitz $\rho=0{\cdot}9$, $n=100$  & 6 (26) & 5 (26) & 6 (8)\\
Toeplitz $\rho=0{\cdot}99$, $n=100$ & 7 (33) & 5 (26) & 6 (8)\\
Wishart, $n=100$                    & 5 (19)$^{140}$ & 4 (18) & 5 (6)\\
One-factor, $n=100$                 & 7 (34)$^{63}$ & 6 (32) & 7 (11)\\
$z\sim N(0,4I_d)$, $n=50$           & 14 (6)$^{82}$ & 9 (5) & 10 (3)\\
$z\sim N(0,16I_d)$, $n=50$          & 24 (9)$^{94}$ & 12 (6) & 14 (4)\\
\bottomrule
\end{tabular}
\end{table}

Newton with Armijo backtracking fails in up to $14$\% of
replications, even on the well-conditioned Wishart design, through
stagnation of the line search at the floating-point resolution of
$f$, the regime step~5 of GFT-FP+N is designed to traverse; its counts
on the deterministic designs are also unstable across linear-algebra
builds (Section~\ref{sec:s3}). With the safeguards of GFT-FP+N, exact
Newton steps converge in every replication at the same or lower cost.
The failures are therefore attributable to the missing safeguards,
not to Newton's method as such, and the comparison that matters,
between full Newton and GFT-FP+N, reduces to cost per iteration:
explicit $O(n^{4})$ Hessian formation against matrix-free products at
two matrix multiplications each.

\subsubsection{Anderson acceleration and limited-memory BFGS}\label{sec:tools}

Two general-purpose tools apply directly to the problem without a
Jacobian. Anderson acceleration \citep{WalkerNi:2011} of the
fixed-point map $x\mapsto x-\ell(x)$, with the residual $-\ell(x)$
evaluated in the log domain and memory $5$, costs one
eigendecomposition per iteration plus a small least-squares solve
(\texttt{inv\_gft\_anderson}). Limited-memory BFGS
\citep{LiuNocedal:1989} on $f$ with memory $10$, two-loop recursion,
the usual initial scaling and Armijo backtracking on $f$ costs one
eigendecomposition per function evaluation; it is run from $x=0$ and,
as for Broyden's method in Section~\ref{sec:s5}, after an initial
log-domain fixed-point phase until $\max_i\ell_i\leq\log2$
(\texttt{inv\_gft\_lbfgs}, option \texttt{globalized}); both
L-BFGS variants are given the untested-step rule of GFT-FP+N near the
rounding floor, without which their Armijo backtracking stalls there,
while Anderson acceleration has no line search and needs no such rule.
A guarded variant of Anderson acceleration accepts a proposal only
when it passes the test of Section~\ref{sec:algorithm} with
$\theta=1/4$, untested below the certification threshold of step~5,
and otherwise clears its memory and takes the fixed-point step
(\texttt{inv\_gft\_anderson}, option \texttt{guarded}).
Table~\ref{tab:tools}
reports the two extreme designs on the draws of Table~\ref{tab:bench}.

\begin{table}[h]
\centering
\caption{Median eigendecompositions over converging replications, median
wall time in milliseconds (in parentheses) and failures out of
1{,}000 (superscripts), extreme designs with $n=50$, tolerance
$10^{-13}$, same inputs across columns.}\label{tab:tools}
\medskip
\small
\begin{tabular}{lccccc}
\toprule
 & Anderson & Anderson, guarded & L-BFGS & L-BFGS, globalized & GFT-FP+N\\
\midrule
$z\sim N(0,4I_d)$  & $92{\cdot}0$ (17) & $92{\cdot}0$ (17) & $88{\cdot}0$ (16) & $59{\cdot}0$ (11) & $10{\cdot}0$ (3)\\
$z\sim N(0,16I_d)$ & $157{\cdot}0$ (29) & $158{\cdot}0$ (30) & $153{\cdot}0$ (28) & $88{\cdot}0$ (16) & $14{\cdot}0$ (4)\\
\bottomrule
\end{tabular}
\end{table}

Both tools converge in every replication on these inputs. The
ordinary Anderson scheme and this L-BFGS variant with untested
terminal steps have no convergence guarantee here, however, and with
larger memory Anderson acceleration does fail: with memory $50$ it
diverged on $9$ of the $1{,}000$ draws of the $z\sim N(0,16I_d)$
design (Section~\ref{sec:variants}). The guarded variant converges
from every starting point in exact arithmetic, by the argument of
Section~\ref{sec:algorithm}; on these inputs it needs essentially the
same number of eigendecompositions as the ordinary scheme, at one
percent more time for the evaluations of $f$, and at memory $50$ it
removes the divergences. Their cost is the point. Anderson acceleration and L-BFGS from $x=0$
need nine to eleven times the eigendecompositions of GFT-FP+N, and
the globalized L-BFGS still needs about six times as many, at
correspondingly higher wall times.
The gap is the difference between secant information accumulated over
a memory of ten steps and the exact curvature that the Hessian--vector
products supply to conjugate gradients at a cost of two matrix
multiplications each.

\section{Discussion}\label{sec:discussion}

Recovering $C$ from $\gamma$ is an exponential-family analogue of
matrix balancing: the fixed point plays the role of Sinkhorn scaling
\citep{Sinkhorn:1964}, GFT-FP+N that of its Newton replacements
\citep{KnightRuiz:2013}. Two extensions appear promising. First, GFT-FP+N uses only
exponential actions, Fr\'echet-derivative actions, available via an
augmented matrix \citep{AlMohyHigham:2009}, and extraction of
$\diag(e^{A})$; for banded, sparse, or Toeplitz log-correlation
structures the first two reduce to matrix-vector products, while the
diagonal requires $n$ actions or a stochastic estimator, so the
balance of costs in a matrix-free variant for large structured
problems is a question in its own right, left for future work. Second, differentiating
$\diag(e^{A[x^{\ast}(z);z]})=\one$ gives
$\partial x^{\ast}/\partial z'=-H^{\ast-1}B$, with $B$ the Jacobian of
$\diag(e^{A})$ in the off-diagonal elements, so delta-method standard
errors and differentiation through $\gamma^{-1}$ involve exactly the
matrix $H^{\ast}$ and the same conjugate-gradient machinery; applied
to a single direction $\Delta z$, the same formula gives the tangent
predictor of Section~\ref{sec:numerics}, which reuses the previous
eigendecomposition and costs only products. For large $n$, where a
Hessian--vector product costs about a third of an eigendecomposition,
the same eigendecomposition also yields the certified quadrature
preconditioner \eqref{eq:Mr}, whose required order is set by the
condition number of $C$. Finally,
$\log\det\gamma^{-1}(z)=\tr A[x^{\ast}]=\one'x^{\ast}(z)$, so the
log-determinant term of likelihoods under this parametrization costs
no further linear algebra.

\section*{Supplementary materials}
\begin{description}
\item[Appendix] Code and reproduction, numerically safe evaluation of
the algorithm, cross-language agreement on serialized inputs, and the
$n=800$ replications, the variants, safeguarded acceleration and
sequential-inversion experiments, and quadrature preconditioning
(Sections~S1 to S6 of this manuscript).
\item[Replication package] The Julia package \texttt{GFT.jl} with
all scripts, seeds, and logged outputs behind the results:
\texttt{https://github.com/reinhardhansen/GFT}.
\item[R package] \texttt{GFT}, available from CRAN.
\end{description}

\section*{Declaration of the use of generative AI and AI-assisted
technologies}
During the preparation of this work the authors used large language
models (Claude, Anthropic; ChatGPT, OpenAI) to assist with
mathematical exploration and checking, code development and
verification, numerical experiments, and drafting; the
transformation, the fixed-point method and the Fr\'echet-derivative
analysis underlying the Hessian representation originate in the
authors' earlier work and unpublished notes. The authors reviewed and
independently verified all output, edited it as necessary, and take
full responsibility for the content of the publication.

\section*{Acknowledgement}
The first author gratefully acknowledges financial support from the
Social Sciences and Humanities Research Council of Canada (SSHRC)
through an Insight Grant [435-2026-1701].

\appendix
\setcounter{section}{0}
\setcounter{table}{0}
\renewcommand{\thesection}{S\arabic{section}}
\renewcommand{\thetable}{S\arabic{table}}

\begin{center}
{\large\bf Appendix: supplementary material}
\end{center}

\section{Code and reproduction}\label{sec:s1}

The replication package contains the reference implementation and all
scripts, seeds, and outputs behind the results in the paper, and is
available at \texttt{https://github.com/reinhardhansen/GFT}. Its
structure is as follows. \texttt{GFT.jl} is a standard Julia package
(installable from the repository, after which \texttt{using GFT}
suffices; standard libraries only, Julia $\geq1.6$):
\texttt{gft} computes the forward
transformation, \texttt{inv\_gft} implements GFT-FP+N, and
\texttt{inv\_gft\_fp}, \texttt{inv\_gft\_broyden} (with an option
\texttt{globalized} described in Section~\ref{sec:s5}) and
\texttt{inv\_gft\_newton} implement the comparison methods, the
last with an argument \texttt{safeguard} whose value \texttt{false}
gives Newton with Armijo backtracking (Section~\ref{sec:newtonsafe}) and
whose default \texttt{true} adds the terminal safeguards of GFT-FP+N.
The full-Newton column of Table~\ref{tab:bench} is \texttt{inv\_gft}
with \texttt{exact\_hess = true}, the same algorithm as GFT-FP+N with
the explicit Hessian in place of conjugate gradients;
\texttt{inv\_gft\_anderson} (with an option \texttt{guarded}
implementing the safeguard of Section~\ref{sec:algorithm}) and
\texttt{inv\_gft\_lbfgs} implement the tools of
Section~\ref{sec:tools}; \texttt{inv\_gft\_path} inverts a sequence
of vectors with warm starts and, with the option \texttt{predictor},
the tangent predictor \eqref{eq:pred}, which \texttt{gft\_predict}
computes. The variants of Section~\ref{sec:variants} are options of
\texttt{inv\_gft}: the version of GFT-FP+N in the first draft of this
paper is \texttt{inv\_gft(z; residual = :gradient, forcing = :sqrt,
normalize = false, phase = true, adaptive = false)}; the option
\texttt{preconditioner} selects the diagonal preconditioner
(\texttt{:diagonal}, the default), the quadrature preconditioner at
every Newton step (\texttt{:quadrature}), or the rule of
Section~\ref{sec:quad} (\texttt{:auto}), and \texttt{precond.jl}
runs the experiments of that section. The R package
\texttt{GFT} on CRAN (version 1.2.0; base R, no dependencies) is a
function-for-function port of the Julia package under the same
conventions and argument names (with \texttt{"log"} for
\texttt{:log}, and so on), including the variants, the preconditioner
options, the tangent predictor, and the comparison solvers, with its
own golden-value tests. The benchmarks of the paper use the Julia
implementation at release v1.2.0.
\texttt{runtests.jl} is the test suite, whose golden values were
generated by an independent NumPy implementation
(Section~\ref{sec:s3}); \texttt{overnight.jl} runs the full benchmark
protocol and
writes the comma-separated files in \texttt{results/} from which
Table~\ref{tab:bench} and Figure~\ref{fig:conv} are generated
mechanically by the
notebook \texttt{figures.ipynb}; \texttt{final\_checks.jl} runs the
same-input comparisons of
Sections~\ref{sec:s4} to \ref{sec:tools}; \texttt{ablation.jl} runs
the experiments of Section~\ref{sec:variants};
\texttt{check\_serialized.jl}, with the inputs and NumPy counts in
\texttt{serialized/}, runs the cross-language agreement check of
Section~\ref{sec:s3}; \texttt{recheck\_n800.jl}
isolates the five hundred
$n=800$ replications (Section~\ref{sec:s6}).

All results in the paper derive from one logged run
(\texttt{results/log.txt}): Julia 1.13.0, single-threaded OpenBLAS,
Apple M2 Max processor, seed \texttt{MersenneTwister(18900217)}, the
date of birth of R.~A. Fisher, double precision, stopping tolerance
$\|\diag(e^{A[x]})-\one\|_{\infty}\leq10^{-13}$. To reproduce: run
\texttt{julia -t 1 runtests.jl}, then \texttt{julia -t 1 overnight.jl}
(four to six hours), then \texttt{julia -t 1 final\_checks.jl}, and
finally execute \texttt{figures.ipynb} with a Julia kernel and
Plots.jl.

\section{Numerically safe evaluation and algorithmic constants}\label{sec:s2}

This section records the floating-point detail behind the evaluation,
without overflow or cancellation, of the quantities that the GFT-FP+N
algorithm computes from the eigendecomposition, together with the
constants of its line search. Accuracy of the returned matrix is a
separate matter, governed by its conditioning as discussed in
Section~\ref{sec:numerics}.

\emph{The vector $\ell$.} With $A[x]=Q\Lambda Q'$ and $Q=(q_{ik})$,
the $i$th diagonal element of $e^{A[x]}$ is
$\sum_k q_{ik}^{2}e^{\lambda_k}$, which can overflow double precision
far from the solution. The algorithm therefore never forms it there,
computing instead the row-wise weighted log-sum-exp
\[
\ell_i=m_i+\log\Big(\sum\nolimits_k e^{a_{ik}-m_i}\Big),
\qquad
a_{ik}=\lambda_k+2\log|q_{ik}|,
\quad m_i=\max\nolimits_k a_{ik},
\]
with $\log0=-\infty$, in which every exponential is at most one. The
row-specific shift matters: a global shift $m=\max_k\lambda_k$ can
underflow to $-\infty$ for a row that is nearly orthogonal to the top
eigenspace of a very spread spectrum. The fixed-point update
$x\leftarrow x-\ell$ and the normalization \eqref{eq:norm} use only
$\ell$ and the eigenvalues; the algorithm takes fixed-point steps
while $\min_i\ell_i<-700$, a conservative threshold below which
$e^{\ell_i}$, though still representable
($e^{-700}\approx10^{-304}$), approaches the underflow limit of
double precision. The Newton system is formed in the scaled variables
$e^{-c}D$, $e^{-c}L$, and $e^{-c}D\ell$ with $c=\lambda_{\max}(A[x])$,
so that no intermediate quantity overflows even without
normalization.

\emph{The residual $g$.} In the Newton phase,
$g_i=e^{\ell_i}-1$ is evaluated as $\mathrm{expm1}(\ell_i)$, where
$\mathrm{expm1}(t)$ denotes the standard library function computing
$e^{t}-1$ to full relative accuracy for small $t$. The naive
evaluation retains only about three significant digits at
$\ell\approx10^{-13}$ and none near $10^{-16}$, which would
corrupt the convergence test
$\|g\|_{\infty}<10^{-13}$ at exactly the scale where it is applied.
The preconditioner diagonal is $d_i=e^{\ell_i}$, evaluated directly:
the algebraically equal $1+\mathrm{expm1}(\ell_i)$ cancels to zero
for $\ell_i\leq-37$.

\emph{The divided differences $L$.} The matrix $L$ of
Section~\ref{sec:algorithm} is
evaluated in the symmetric form
\[
L_{kl}=e^{m_{kl}}\,\frac{-\mathrm{expm1}(-r_{kl})}{r_{kl}},
\qquad
m_{kl}=\max(\lambda_k,\lambda_l),\quad r_{kl}=|\lambda_k-\lambda_l|,
\]
with the continuous value $e^{m_{kl}}$ at $r_{kl}=0$. Unlike one-sided
forms such as $e^{\lambda_l}\,\mathrm{expm1}(r)/r$, this cannot
produce $0\times\infty$ for widely spread spectra, and it is accurate
for coincident and nearly coincident eigenvalues.

\emph{The objective $f$ and the line search.} Every evaluated point
is normalized by \eqref{eq:norm}, with $s$ computed as a log-sum-exp
of the eigenvalues, so $\tr(e^{A[x]})=n$ and
$f=\sum_k e^{\lambda_k}-\sum_i x_i$ is computed from the eigenvalues
without overflow. The Armijo backtracking uses sufficient-decrease
constant $10^{-4}$, step halving from $\min(1,2t_{\mathrm{prev}})$,
and minimum step $10^{-8}$; trial points are normalized before $f$ is
evaluated, and trial points at which $f$ evaluates nonfinite are
rejected. If conjugate gradients reach the cap of $2n$ products before
the forcing condition holds, the current iterate is used. A truncated
direction satisfies $(D\ell)'\delta<0$ but not necessarily
$g'\delta<0$, and step~5 tests the latter before anything else.

\emph{The two safeguards.} Near the minimum the predicted decrease
$|g'\delta|$ falls below what floating-point comparison of $f$ values
can certify; when it falls below the conservative threshold
$10^{-12}(1+|f|)$, and $\|\ell\|_{\infty}<10^{-3}$ confirms that the
iterate is near the solution (a small $|g'\delta|$ alone does not,
for the system $H\delta=-D\ell$), the full Newton step is taken
without a test. At
the rounding floor of $\diag(e^{A[x]})$, of order
$\varepsilon_{\mathrm{mach}}\lambda_{\max}(e^{A[x]})$, the computed
gradient is dominated by rounding error and Newton directions cease to
make progress, while the fixed-point update $x\leftarrow x-\ell$
remains effective; the algorithm therefore switches permanently to
fixed-point steps once $\|g\|_{\infty}<10^{-9}$ fails to halve over
three iterations. Section~\ref{sec:s6} documents the replication that
motivated this safeguard.

\section{Cross-language agreement on serialized inputs}\label{sec:s3}

The test suite embeds golden values: input vectors $z$ and solutions
$x^{\ast}(z)$ stored with seventeen significant digits, generated by
an independent NumPy
implementation, developed and verified separately from the Julia code,
for a Toeplitz design, an equicorrelation design with repeated
eigenvalues, and a fixed low-dimensional $z$ with its reconstructed
correlation matrix. A passing run of \texttt{runtests.jl} verifies that
the Julia implementation reproduces these solutions to at most
$10^{-10}$ in the maximum norm, together with round-trip identities for
all four solvers, finite-difference checks of the gradient, agreement
of Hessian-vector products with the exact Hessian, and the bounds of
Theorem~\ref{prop:spec} at randomly drawn points; the R package's
test suite reproduces the same golden values. On the deterministic
designs of
Table~\ref{tab:bench}, the iteration counts of GFT-FP, Broyden's method,
and GFT-FP+N
agree exactly across the two implementations. On serialized random
inputs, twenty-five draws each of the Wishart, one-factor, and
$z\sim N(0,16I_d)$ designs shipped with the package
(\texttt{check\_serialized.jl}), the counts of Broyden's method and
GFT-FP+N agreed in all $75$ replications per method; the fixed-point
counts, which reach a thousand iterations, agreed in $44$ and
differed by at most $8$; the counts of Newton with Armijo
backtracking differed in $29$ replications, in one case by $573$, with $5$
disagreements in convergence status, consistent with the sensitivity
described next.

The exception is Newton with Armijo backtracking on near-singular
deterministic designs: its count on the Toeplitz $\rho=0{\cdot}99$,
$n=100$ problem was $7$ in NumPy and $10$, $23$ and $34$ in three
algebraically equivalent Julia builds, and $7$ in the build behind
Table~\ref{tab:bench}, whose failure counts on the random designs also
differ from an earlier build's by up to $20$ of $1{,}000$. The
accept--reject decisions of its Armijo line search sit near boundaries
at tolerance $10^{-13}$ and flip under rounding-level perturbations,
and the flips cascade. The counts of the other methods are stable;
this sensitivity is specific to an objective-based line search at
tight tolerance without the untested-step rule, and is one of the
reasons GFT-FP+N takes the full step untested within the numerical
safeguard described in Section~\ref{sec:s2}.

\section{The $n=800$ replications and the rounding floor}\label{sec:s6}

The terminal safeguard of the algorithm is motivated by the $n=800$
design, on which a stopping rule on
$\|\diag(e^{A[x]})-\one\|_{\infty}$ at tolerance $10^{-13}$ operates
at the rounding floor. Rounding errors in $\diag(e^{A[x]})$ scale with
$n$, and on this design the criterion evaluated at a reference
solution, $\diag(e^{A})$ from the eigendecomposition of $\log C$
itself, is of order $10^{-13}$ (up to $2{\cdot}1\times10^{-13}$ on the
draw with the largest condition number, $6{\cdot}4\times10^{4}$). With
the terminal safeguard disabled, a replication encountered during
development does not reach $10^{-13}$: its residual stalls at
$3{\cdot}6\times10^{-13}$, where Newton directions are computed from
gradients dominated by rounding error while fixed-point steps remain
effective. With the safeguard, all $500$ replications converge with
the algorithm of Section~\ref{sec:algorithm}, with median $8$ and
quartiles $7$ and $8$ eigendecompositions, but the floor is visible in
the tail: $16$ replications need between $21$ and $159$, spent in
terminal fixed-point mode with the residual wandering between
$10^{-13}$ and $3\times10^{-12}$ until an iterate falls below the
tolerance. The first-draft version of the algorithm, which needs
three more eigendecompositions on this design, exhausted the cap of
$500$ iterations in that mode on one of the $500$ draws, with
residuals between $4\times10^{-13}$ and $1{\cdot}6\times10^{-12}$
(\texttt{results\_v110\_20260918}); GFT-FP reached the tolerance on
the same draw. At tolerance $10^{-12}$, appropriate at this size, all
$500$ replications converge in at most $14$ eigendecompositions,
median $7$, with no terminal fixed-point steps
(\texttt{recheck\_n800.jl}, which takes the tolerance as an argument
and replays the same draws). The counts of Table~\ref{tab:bench} for
this design, reported at $10^{-13}$ for comparability with the other
rows, are therefore conservative for GFT-FP+N.

\section{Variants, safeguarded acceleration, and sequential inversion}\label{sec:variants}

\emph{Ablation of the Newton step.} GFT-FP+N of
Section~\ref{sec:algorithm} differs from the version in the first
draft of this paper in four ways. It solves $H\delta=-D\ell$ instead
of $H\delta=-g$. It uses the forcing tolerance $\min(1/2,\|\ell\|)$
instead of the square root. It normalizes every evaluated point by
\eqref{eq:norm} and drops the fixed-point phase while
$\max_i\ell_i>\log2$. And it starts the Armijo search from
$\min(1,2t_{\mathrm{prev}})$. Table~\ref{tab:ablation} varies these
choices one at a time on the draws of Table~\ref{tab:bench}, with all
other rules of Section~\ref{sec:algorithm} unchanged, including the
$-700$ rule, terminal mode, the untested-step rule, the constants of
Section~\ref{sec:s2}, and the caps. On the $z$-designs the log residual and quadratic forcing each save
about two eigendecompositions under otherwise identical rules, and the
effects add: D needs $13$ and $18$ against $17$ and $23$ for A.
Normalization alone changes little. Dropping the phase separates the
two residuals sharply: it costs the gradient residual two to seven
eigendecompositions (AN0, BN0) and saves the log residual three to
four (CN0, DN0). The adaptive initial step trims the tail ($\sigma=4$
maximum $28$ to $26$) without moving the medians. Products stay between
$57$ and $95$ throughout, so the savings in eigendecompositions are
savings in time: the median wall time falls by $27$\% and $28$\% from A
to DN0t. On the structured designs the same ordering holds, with the
final variant needing $5$, $6$, $6$, $6$, $5$, and $7$ eigendecompositions
on the six designs of Table~\ref{tab:bench} with $n\leq300$ against
$7$, $8$, $9$, $9$, $7$, and $10$ for A, and $8$ against $11$ (median
over the first $50$ draws) at $n=800$, where an eigendecomposition
costs only about three products and the time saving shrinks to
$13$\%. Products matter more as $n$ grows: on the Toeplitz design at
$n=300$, quadratic forcing with the gradient residual (B) needs $44$
products against $26$ for A and is slower despite two fewer
eigendecompositions, whereas with the log residual (D) the count
stays at $26$. No variant failed or produced a non-descent direction
on any design.

\begin{table}[h]
\centering
\caption{Ablation of the Newton step on the $z$-designs, $n=50$,
$1{,}000$ draws each: median and maximum of the number of
eigendecompositions $E$, median number of Hessian--vector products
$P$, and median wall time in milliseconds. The residual is $g$ or
$\ell$; $a$ is the exponent in $\eta=\min(1/2,\|\cdot\|^{a})$; N marks
normalization of every evaluated point, phase the fixed-point phase
while $\max_i\ell_i>\log2$, and t the adaptive initial step. No
variant failed or produced a non-descent direction.}\label{tab:ablation}
\medskip
\small
\begin{tabular}{llcccc|ccc|ccc}
\toprule
 & & & & & & \multicolumn{3}{c|}{$\sigma=2$} & \multicolumn{3}{c}{$\sigma=4$}\\
Label & residual & $a$ & N & phase & t & $E$ [max] & $P$ & ms & $E$ [max] & $P$ & ms\\
\midrule
A (first draft) & $g$ & $1/2$ &  & \checkmark &  & 17 [23] & 64 & 4.4 & 23 [51] & 87 & 5.9\\
B & $g$ & $1$ &  & \checkmark &  & 15 [21] & 71 & 4.0 & 21 [48] & 88 & 5.4\\
C & $\ell$ & $1/2$ &  & \checkmark &  & 15 [20] & 61 & 4.0 & 21 [36] & 86 & 5.3\\
D & $\ell$ & $1$ &  & \checkmark &  & 13 [17] & 63 & 3.5 & 18 [33] & 84 & 4.9\\
\addlinespace[2pt]
AN & $g$ & $1/2$ & \checkmark & \checkmark &  & 16 [21] & 57 & 4.2 & 22 [36] & 78 & 5.6\\
BN & $g$ & $1$ & \checkmark & \checkmark &  & 14 [18] & 58 & 3.7 & 20 [33] & 88 & 5.2\\
CN & $\ell$ & $1/2$ & \checkmark & \checkmark &  & 15 [19] & 59 & 3.9 & 21 [35] & 80 & 5.4\\
DN & $\ell$ & $1$ & \checkmark & \checkmark &  & 13 [17] & 66 & 3.6 & 19 [33] & 79 & 4.9\\
\addlinespace[2pt]
AN0 & $g$ & $1/2$ & \checkmark &  &  & 19 [41] & 62 & 4.9 & 30 [63] & 90 & 7.4\\
BN0 & $g$ & $1$ & \checkmark &  &  & 17 [38] & 69 & 4.5 & 28 [60] & 95 & 6.9\\
CN0 & $\ell$ & $1/2$ & \checkmark &  &  & 13 [20] & 63 & 3.6 & 17 [31] & 92 & 4.9\\
DN0 & $\ell$ & $1$ & \checkmark &  &  & 10 [18] & 68 & 3.2 & 15 [28] & 95 & 4.6\\
\addlinespace[2pt]
CN0t & $\ell$ & $1/2$ & \checkmark &  & \checkmark & 13 [19] & 63 & 3.6 & 16 [29] & 92 & 4.7\\
DN0t (Section~4) & $\ell$ & $1$ & \checkmark &  & \checkmark & 10 [16] & 68 & 3.2 & 14 [26] & 93 & 4.3\\
\bottomrule
\end{tabular}
\end{table}

\emph{Descent.} The system $H\delta=-D\ell$ need not produce a
descent direction for $f$. For the $3\times3$ matrix $A$ with
off-diagonal elements $A_{12}=4{\cdot}5$, $A_{13}=11{\cdot}5$,
$A_{23}=-12$ and diagonal $x=(-3{\cdot}9,-16{\cdot}1,-25{\cdot}5)'$, one
has $\ell=(0{\cdot}916,-4{\cdot}758,-0{\cdot}644)'$ and
$g'\delta=9{\cdot}46>0$ for the exact solution of $H\delta=-D\ell$,
with $\operatorname{cond}(H)=2{\cdot}0\times10^{3}$. Step~5 therefore
tests $g'\delta<0$ before the untested-step rule. In all our runs the
test never failed. Backtracking on $\psi(x)=\tfrac12\|\ell(x)\|^{2}$
instead of $f$ is an overflow-free alternative for which
$H\delta=-D\ell$ is always a descent direction when the forcing test
holds, since $\nabla\psi'\delta=\ell'D^{-1}H\delta\leq-(1-\eta)\|\ell\|^{2}$;
it gives a globally convergent inexact Newton method in the sense of
\citet{EisenstatWalker:1994}, but in development runs it needed about
one more eigendecomposition than backtracking on $f$.

\emph{Safeguarded acceleration.} Let $T(x)=x-\ell(x)$ and
$V(x)=\sum_i(e^{\ell_i(x)}-1-\ell_i(x))$, and fix $\theta\in(0,1)$.
From any $x_0$, let $x_{k+1}=y_k$ if $f(y_k)\leq f(x_k)-\theta V(x_k)$
and $x_{k+1}=T(x_k)$ otherwise, where the finite trial points $y_k$
are arbitrary. Then $x_k\to x^{\ast}(z)$. Indeed, by
Proposition~\ref{prop:descent} both branches give
$f(x_{k+1})\leq f(x_k)-\theta V(x_k)$, so
$\theta\sum_kV(x_k)\leq f(x_0)-f(x^{\ast})<\infty$. The iterates stay
in the compact sublevel set $\{f\leq f(x_0)\}$, and since $e^{t}-1-t>0$
for $t\neq0$, $V(x_k)\to0$ forces $\ell(x_k)\to0$. Every cluster point
therefore solves $\ell=0$, so equals $x^{\ast}(z)$, and compactness
gives convergence of the whole sequence. No bound on mixing
coefficients is needed. Near $x^{\ast}$, a full Newton step of
Section~\ref{sec:algorithm} decreases $f$ by
$\tfrac12\ell'H^{\ast-1}\ell+O(\|\ell\|^{3})$ and
$V=\tfrac12\|\ell\|^{2}+O(\|\ell\|^{3})$. Since $H^{\ast}\preceq I$,
the ratio of the two has lower limit at least one (numerically about
two on the $z$-designs), so every $\theta<1$ eventually accepts full
steps. In floating point, neither $f$-differences nor small values of
$V$ can be certified near the rounding floor. The implementation
therefore accepts trial points untested once
$\theta V(x)\leq10^{-12}(1+|f|)$, as in step~5, and evaluates
$e^{t}-1-t$ by its series for $|t|<10^{-3}$. The convergence statement
concerns exact arithmetic. Used as the only globalization of the
Newton phase, one full trial step and a fixed-point step on rejection,
the safeguard was slower than Armijo backtracking in development runs,
since far from $x^{\ast}$ a rejected trial wastes an eigendecomposition
and the fixed-point step then makes less progress than a damped Newton
step. It is therefore used for Anderson acceleration, which has no
line search.

\emph{Anderson acceleration: memory and safeguard.}
Table~\ref{tab:aasweep} repeats the comparison of
Section~\ref{sec:tools} for memory $5$ to $50$, with and without the
safeguard ($\theta=1/4$), on the draws of Table~\ref{tab:bench}.
More memory saves eigendecompositions up to memory $20$ to $50$ and
time up to memory $10$ to $20$; beyond that the least-squares solve
offsets the saving and the tail grows. Without the safeguard, memory
$20$ diverged on $2$ and memory $50$ on $9$ of the $1{,}000$ draws of
the $\sigma=4$ design; the iterates never reached $\|\ell\|_{\infty}<1$
and drifted away. With the safeguard every draw converged. At memory
$5$ and $10$ the counts are unchanged to within one eigendecomposition
and the time within one percent; at memory $50$ the guarded medians
are $60$ and $83$ against $60$ and $76$ over the converging draws of
the unguarded scheme. After a rejection the memory is cleared; this
is not needed for convergence, but it avoids reusing the differences
that produced the rejected proposal. The fastest Anderson
configurations remain about four times slower than GFT-FP+N.

\begin{table}[h]
\centering
\caption{Anderson acceleration on the $z$-designs, $n=50$, $1{,}000$
draws: median [maximum] of the number of eigendecompositions over
converging draws, failures as superscripts, and median wall time in
milliseconds.}\label{tab:aasweep}
\medskip
\begin{tabular}{lcc|cc}
\toprule
 & \multicolumn{2}{c|}{$\sigma=2$} & \multicolumn{2}{c}{$\sigma=4$}\\
Method & $E$ [max] & ms & $E$ [max] & ms\\
\midrule
GFT-FP+N & 10 [16] & 3.2 & 14 [26] & 4.3\\
Anderson, $m=5$ & 93 [133] & 17.9 & 156 [241] & 30.2\\
Anderson, $m=5$, guarded & 93 [133] & 17.8 & 158 [243] & 30.4\\
Anderson, $m=10$ & 72 [104] & 14.5 & 112 [182] & 22.6\\
Anderson, $m=10$, guarded & 72 [97] & 14.5 & 112 [173] & 22.5\\
Anderson, $m=20$ & 63 [79] & 14.4 & 97 [209]$^{2}$ & 22.7\\
Anderson, $m=20$, guarded & 63 [76] & 14.2 & 95 [118] & 21.5\\
Anderson, $m=50$ & 60 [109] & 17.5 & 76 [225]$^{9}$ & 24.4\\
Anderson, $m=50$, guarded & 60 [109] & 16.9 & 83 [293] & 21.8\\
\bottomrule
\end{tabular}
\end{table}

\emph{Sequential inversion.} Table~\ref{tab:warmpred} repeats the
warm-start experiment of Section~\ref{sec:numerics} for increment
standard deviations $0{\cdot}0005$, $0{\cdot}002$, and $0{\cdot}008$,
for the first-draft and the present version of GFT-FP+N, with and
without the predictor \eqref{eq:pred} solved to relative residual
$10^{-8}$ or $10^{-3}$. Averages include the cold start and the
predictor's products and times. The predictor saves at least one eigendecomposition per step at every
step size, and with the algorithm of Section~\ref{sec:algorithm} it
brings the count to exactly three per warm step: the evaluation at
the predicted point and one after each of two Newton steps. The
relative tolerance $10^{-3}$ is as accurate as $10^{-8}$ at these step
sizes while saving seven to eight products per step. The time gain,
however, depends on the step: at the step size of
Section~\ref{sec:numerics} it is $23$\%, at a quarter of that step
warm starts alone are nearly as good, and at four times that step the
predictor's conjugate-gradient solve costs more than the
eigendecomposition it saves, so the time rises by $8$\%. Because the
cost of an eigendecomposition relative to a product falls with $n$,
the gain should be expected to shrink at larger dimensions. (The
first-draft rows and the warm-start row of Section~\ref{sec:numerics}
come from separate runs and differ at the level of timing noise.)

\begin{table}[h]
\centering
\caption{Warm starts and the tangent predictor: means per step over
ten sequences of $50$ inversions, $n=100$, including the cold start.
$E$ is the number of eigendecompositions, $P$ the number of
Hessian--vector products, and $B\Delta z$ the number of directional
derivatives \eqref{eq:pred}.}\label{tab:warmpred}
\medskip
\small
\begin{tabular}{llcccc}
\toprule
sd of increments & Method & $E$/step & $P$/step & $B\Delta z$/step & ms/step\\
\midrule
$0{\cdot}0005$ & first draft, warm start & 5.73 & 24.8 & 0.00 & 9.56\\
 & first draft, predictor, $10^{-8}$ & 4.13 & 33.8 & 0.98 & 8.37\\
 & first draft, predictor, $10^{-3}$ & 4.13 & 26.3 & 0.98 & 7.61\\
 & Section~4, warm start & 4.07 & 24.7 & 0.00 & 7.05\\
 & Section~4, predictor, $10^{-8}$ & 3.08 & 38.1 & 0.98 & 7.59\\
 & Section~4, predictor, $10^{-3}$ & 3.08 & 30.6 & 0.98 & 6.90\\
\addlinespace[2pt]
$0{\cdot}002$ & first draft, warm start & 6.10 & 23.4 & 0.00 & 9.50\\
 & first draft, predictor, $10^{-8}$ & 4.51 & 33.8 & 0.98 & 8.82\\
 & first draft, predictor, $10^{-3}$ & 4.51 & 26.4 & 0.98 & 8.05\\
 & Section~4, warm start & 4.81 & 32.8 & 0.00 & 8.75\\
 & Section~4, predictor, $10^{-8}$ & 3.08 & 33.8 & 0.98 & 7.22\\
 & Section~4, predictor, $10^{-3}$ & 3.08 & 26.2 & 0.98 & 6.73\\
\addlinespace[2pt]
$0{\cdot}008$ & first draft, warm start & 7.28 & 27.4 & 0.00 & 9.30\\
 & first draft, predictor, $10^{-8}$ & 5.38 & 37.0 & 0.98 & 8.94\\
 & first draft, predictor, $10^{-3}$ & 5.39 & 29.2 & 0.98 & 8.16\\
 & Section~4, warm start & 5.06 & 26.7 & 0.00 & 7.33\\
 & Section~4, predictor, $10^{-8}$ & 4.06 & 46.6 & 0.98 & 8.82\\
 & Section~4, predictor, $10^{-3}$ & 4.06 & 38.8 & 0.98 & 7.93\\
\bottomrule
\end{tabular}
\end{table}

\emph{Options not adopted.} Scale can also be eliminated exactly:
writing $x=y+c\one$ with $\one'y=0$ and minimizing over $c$ gives
$F(y)=n+n\log\{\tr(e^{A[y]})/n\}$, whose Hessian at a normalized point
is $H-dd'/n\succeq0$ with null space $\mathrm{span}(\one)$. We found
no speed advantage over normalization and did not pursue it; we note
that $F$ has a globally Lipschitz gradient with sharp constant $n/2$,
since $H-dd'/n\preceq D-dd'/n\preceq(n/2)I$ at a normalized point,
which places the reduced problem in the smooth convex class.

\section{Quadrature preconditioning}\label{sec:quad}

\emph{Construction.} For $A=A[x]=Q\Lambda Q'$ and the $r$-point
Gauss--Legendre rule $(t_j,a_j)$ on $[0,1]$, let
$M_r=\sum_ja_je^{t_jA}\circ e^{(1-t_j)A}$ as in \eqref{eq:Mr}. Each
$e^{tA}=Q\diag(e^{t\lambda})Q'$ costs one matrix multiplication, and a
symmetric rule pairs $t_j$ with $1-t_j$, so $r$ multiplications and
one Cholesky factorization suffice; with the scaling $e^{-\lambda_{\max}}$
used for \eqref{eq:hv} all entries are bounded. The factor replaces
$D$ inside the conjugate-gradient recurrence, at two triangular solves
per iteration, and the forcing test of step~4 is unchanged. A failed
factorization falls back to $D$.

\emph{Certificate.} With $W=Q'\diag(v)Q$, \eqref{eq:quadform} gives
$v'Hv=\sum_{a,b}L_{ab}W_{ab}^{2}$ and
$v'M_rv=\sum_{a,b}L^{(r)}_{ab}W_{ab}^{2}$, where
$L_{ab}=\int_0^1e^{t\lambda_a+(1-t)\lambda_b}dt$ and $L^{(r)}_{ab}$ is
its quadrature approximation; since $W_{ab}^{2}\geq0$, bounds on the
coefficient ratios are inherited by the matrices. Both coefficients
carry the factor $e^{(\lambda_a+\lambda_b)/2}$, so for a symmetric rule
the ratio is $\phi_r(\lambda_a-\lambda_b)$ with
\[
\phi_r(u)=\frac{\sinh(u/2)/(u/2)}{\sum_ja_j\cosh\{(t_j-\tfrac12)u\}} .
\]
For Gauss--Legendre nodes the quadrature error for $e^{tu}$ is a
positive multiple of $u^{2r}e^{\xi u}$ \citep{DavisRabinowitz:1984},
so $\phi_r\geq1$ with equality at $u=0$, which occurs on the diagonal.
For $r=1$ the rule is the midpoint rule and $M_1=e^{A/2}\circ e^{A/2}$,
so the lower bound specializes to $e^{A/2}\circ e^{A/2}\preceq H$, a
bound established in \citet{ArchakovHansen:Legendre} together with
the sharp relative bound $\alpha\{\ell(E)\}D\preceq H\preceq D$ in
terms of the spread $\ell(E)$ of the spectrum of $E=e^{A[x]}$; the
quadrature family refines the lower bound with $r$ and certifies the
ratio.
Hence $M_r\preceq H\preceq\beta_r(\Delta)M_r$ with
$\beta_r(\Delta)=\sup_{|u|\leq\Delta}\phi_r(u)$ and
$\Delta=\lambda_{\max}-\lambda_{\min}$; numerically $\phi_r$ increases
in $|u|$, so $\beta_r(\Delta)=\phi_r(\Delta)$. Two nodes certify
$\kappa\leq2$ for $\Delta\leq11{\cdot}6$, three for
$\Delta\leq22{\cdot}4$, and four for $\Delta\leq36{\cdot}7$. At the
solution $\Delta=\log\operatorname{cond}(C)$, so the order is set by
the conditioning of $C$: one node for the Toeplitz $\rho=0{\cdot}5$ and
Wishart designs, two for Toeplitz $\rho=0{\cdot}9$ and $0{\cdot}99$
and the one-factor designs, and five to eight for the $z$-designs.

\emph{Selection rule and evidence.} The rule uses $M_r$ at a Newton
step when $\kappa\leq2$ is certified with two to eight nodes, and $D$
when a single node suffices or none of these orders certifies the
bound. Table~\ref{tab:quad} reports the diagonal
preconditioner, quadrature at every step, and the rule on the designs
of Table~\ref{tab:bench} and on larger one-factor, Wishart, and
Toeplitz designs. In Julia the setup is cheap relative to the products, and quadrature
at every step gains on every design where two or more nodes are
certified: the one-factor designs gain $1{\cdot}2$ at $n=100$,
$1{\cdot}3$ at $n=200$ and $400$, and $1{\cdot}5$ at $n=800$, always
with one eigendecomposition fewer and half the products; Toeplitz
$\rho=0{\cdot}9$ gains $1{\cdot}4$ at $n=100$ and $1{\cdot}9$ at
$n=800$; the $z$-designs gain $1{\cdot}2$ despite five to eight nodes
per build, since their products fall from $68$ and $93$ to $17$ and
$19$; Toeplitz $\rho=0{\cdot}99$ is neutral. It loses only where a
single node certifies the bound, Toeplitz $\rho=0{\cdot}5$ and Wishart
at $n=400$ (ratios $0{\cdot}9$), because the diagonal solve is
already fast there. The rule keeps the diagonal preconditioner in
exactly those cases and matches quadrature elsewhere, so it never
loses in the table. No factorization failed and no run failed to
converge.

\begin{table}[h]
\centering
\caption{Quadrature preconditioning: median eigendecompositions $E$ and
Hessian--vector products $P$, median number of nodes per build, and
median wall time relative to the diagonal preconditioner (ratio $>1$
is faster), tolerance $10^{-13}$; one-factor designs marked
$\dagger$ use loadings $\mathrm{Un}(0{\cdot}8,0{\cdot}995)$. No run
failed.}\label{tab:quad}
\medskip
\small
\begin{tabular}{lrcc|ccc|cc}
\toprule
 & & \multicolumn{2}{c|}{Diagonal $D$} & \multicolumn{3}{c|}{$M_r$ at every step} & \multicolumn{2}{c}{Rule}\\
Design & draws & $E\{P\}$ & ms & $E\{P\}$ & $r$ & ratio & $E\{P\}$ & ratio\\
\midrule
Toeplitz $\rho=0{\cdot}5$, $n=100$ & 1 & 5\{15\} & 5.5 & 5\{15\} & 1 & 0.91 & 5\{15\} & 1.00\\
Toeplitz $\rho=0{\cdot}9$, $n=100$ & 1 & 6\{30\} & 7.7 & 5\{8\} & 2 & 1.43 & 5\{8\} & 1.43\\
Toeplitz $\rho=0{\cdot}99$, $n=100$ & 1 & 6\{29\} & 7.6 & 6\{20\} & 2 & 1.00 & 6\{20\} & 1.01\\
Toeplitz $\rho=0{\cdot}99$, $n=300$ & 1 & 6\{28\} & 109 & 6\{21\} & 2 & 1.01 & 6\{21\} & 1.01\\
Wishart, $n=100$ & 1000 & 5\{15\} & 5.6 & 4\{7\} & 1 & 1.33 & 5\{15\} & 1.00\\
One-factor, $n=100$ & 1000 & 7\{34\} & 11.0 & 6\{16\} & 2 & 1.23 & 6\{16\} & 1.23\\
$z\sim N(0,4I_d)$, $n=50$ & 1000 & 10\{68\} & 3.2 & 10\{17\} & 5 & 1.20 & 10\{17\} & 1.21\\
$z\sim N(0,16I_d)$, $n=50$ & 1000 & 14\{93\} & 4.3 & 12\{19\} & 8 & 1.16 & 12\{19\} & 1.17\\
One-factor, $n=800$ & 500 & 8\{35\} & 2305 & 6\{16\} & 2 & 1.46 & 6\{16\} & 1.44\\
\addlinespace[2pt]
One-factor$^{\dagger}$, $n=200$ & 1000 & 7\{34\} & 54.3 & 6\{16\} & 2 & 1.27 & 6\{16\} & 1.27\\
One-factor$^{\dagger}$, $n=400$ & 200 & 7\{32\} & 324 & 6\{16\} & 2 & 1.30 & 6\{16\} & 1.30\\
Wishart, $n=400$ & 100 & 4\{7\} & 99.4 & 4\{7\} & 1 & 0.90 & 4\{7\} & 1.00\\
Toeplitz $\rho=0{\cdot}5$, $n=800$ & 1 & 5\{15\} & 1059 & 5\{15\} & 1 & 0.91 & 5\{15\} & 1.00\\
Toeplitz $\rho=0{\cdot}9$, $n=800$ & 1 & 6\{31\} & 1784 & 5\{8\} & 2 & 1.85 & 5\{8\} & 1.85\\
Toeplitz $\rho=0{\cdot}99$, $n=800$ & 1 & 6\{30\} & 1747 & 6\{21\} & 2 & 1.06 & 6\{21\} & 1.06\\
\bottomrule
\end{tabular}
\end{table}

\emph{Sequences.} In the warm-start design of
Section~\ref{sec:numerics} with the tangent predictor, at $n=100$,
$400$, and $800$ (aggregate step $0{\cdot}14$ per step at every
dimension), the predictor already leaves about three
eigendecompositions per step, so quadrature acts on the products:
with the diagonal preconditioner a warm step costs $3{\cdot}08$, $3{\cdot}20$, and $3{\cdot}83$
eigendecompositions and 26, 30, and 39 products at the three
dimensions, and with $M_r$ rebuilt at every Newton step $3{\cdot}06$, $3{\cdot}17$,
and $3{\cdot}60$ eigendecompositions and 18, 18, and 18 products; the
time per step falls by 8\%, 18\%, and 33\% (from $6{\cdot}5$ to $6{\cdot}0$
milliseconds at $n=100$, 257 to 212 at $n=400$, and $2{\cdot}2$ to $1{\cdot}5$
seconds at $n=800$). The predictor solve uses the same factor.

\bibliographystyle{apalike}
\bibliography{GFTarxiv}

\end{document}